%% file: main.tex
\documentclass{article} 
\usepackage{iclr2026_conference,times}

\input{math_commands.tex}

\usepackage[hyphens]{url}
\usepackage{hyperref}
\usepackage{url}

\title{Market Games for Generative Models: Equilibria, Welfare, and Strategic Entry}

\author{Xiukun Wei, Min Shi, Xueru Zhang \thanks{ Corresponding author.} \\
The Ohio State University\\
\texttt{\{wei.1418,shi.1532,zhang.12807\}@osu.edu}
}

\usepackage{amsmath}
\usepackage{amssymb}
\usepackage{mathtools}
\usepackage{amsthm}
\usepackage{thm-restate}
\usepackage{algorithm}
\usepackage{algorithmic}
\usepackage{enumitem}
\usepackage{booktabs}
\usepackage{multirow}
\usepackage{subcaption}
\usepackage{wrapfig}

\ifodd 1

\newcommand{\com}[1]{\textbf{\color{red}(COMMENT: #1)}}

\newcommand{\clar}[1]{\textbf{\color{green}(NEED CLARIFICATION: #1)}}
\else

\newcommand{\com}[1]{}

\newcommand{\clar}[1]{}

\fi

\theoremstyle{plain}
\newtheorem{proposition}{Proposition}[section]
\newtheorem{lemma}[proposition]{Lemma}

\theoremstyle{definition}
\newtheorem{definition}[proposition]{Definition}

\theoremstyle{remark}

\newtheorem{example}[proposition]{Example}

\iclrfinalcopy 
\begin{document}

\maketitle

\begin{abstract}
Generative model ecosystems increasingly operate as competitive multi-platform markets, where platforms strategically select models from a shared pool and users with heterogeneous preferences choose among them. Understanding how platforms interact, when market equilibria exist, how outcomes are shaped by model-providers, platforms, and user behavior, and how social welfare is affected is critical for fostering a beneficial market environment. In this paper, we formalize a three-layer \textit{model-platform-user} market game and identify conditions for the existence of pure Nash equilibrium. Our analysis shows that market structure, whether platforms converge on similar models or differentiate by selecting distinct ones, depends not only on models' global average performance but also on their localized attraction to user groups. We further examine welfare outcomes and show that expanding the model pool does not necessarily increase user welfare or market diversity. Finally, we design novel best-response training schemes that allow model providers to strategically introduce new models into competitive markets.
\end{abstract}

\input{Generative_Competition/1_Introduction}

\input{Generative_Competition/2_Model}

\input{Generative_Competition/3_Conditions}

\input{Generative_Competition/4_Welfare}

\input{Generative_Competition/More}
\input{Generative_Competition/5_BestRespond}

\input{Generative_Competition/6_Experiment}

\section{Conclusion}
In this paper, we formalize generative AI markets as a three-layer model-platform-user game. From the platform perspective, we characterize conditions for both fully differentiated and homogeneous equilibria, showing that the market is jointly shaped by average model performance and user-specific deviation advantages. From the user perspective, we demonstrate that enlarging the model pool or increasing the number of platforms does not necessarily translate into higher welfare. From the model provider perspective, we propose training schemes that strategically facilitate entry into competitive markets. Together, these findings highlight inherent paradoxes in generative AI markets and point to design principles for more socially aligned ecosystems.

\section*{Ethics Statement}
This work analyzes competitive generative model markets using both theoretical modeling and empirical experiments. Our analysis is abstracted from specific settings and does not involve sensitive personal data, human subjects, or system manipulations. Our findings raise broader ethical implications. The results show that competition among generative models may reduce diversity and user welfare, highlighting the need for responsible governance and transparent platform practices. The proposed training schemes are designed to advance understanding of market dynamics, not to prescribe adversarial practices. Overall, this work aims to inform policy discussions and support the design of more socially beneficial generative ecosystems.

\section*{Reproducibility statement}
We ensure reproducibility by including all definitions, propositions, and proofs, and by detailing model pool construction, user groups, reward functions, and evaluation metrics. Hyperparameters and training procedures are documented in the appendix, and code is released to support replication and extension of our results.

\section*{Acknowledgments}
This work was funded in part by the National Science Foundation under award number  IIS-2202699, IIS-2416895, IIS-2301599, and CMMI-2301601.

\bibliography{reference}
\bibliographystyle{iclr2026_conference}

\input{Generative_Competition/Appendix}

\end{document}

%% file: math_commands.tex
\usepackage{amsmath,amsfonts,bm}

\def\eqref#1{equation~\ref{#1}}

\def\1{\bm{1}}

\def\rv{{\textnormal{v}}}

\def\rx{{\textnormal{x}}}

\def\vtheta{{\bm{\theta}}}

\def\vf{{\bm{f}}}

\DeclareMathAlphabet{\mathsfit}{\encodingdefault}{\sfdefault}{m}{sl}
\SetMathAlphabet{\mathsfit}{bold}{\encodingdefault}{\sfdefault}{bx}{n}

\def\sA{{\mathbb{A}}}

\def\sC{{\mathbb{C}}}
\def\sD{{\mathbb{D}}}

\def\sG{{\mathbb{G}}}

\def\sI{{\mathbb{I}}}

\def\sM{{\mathbb{M}}}

\def\sU{{\mathbb{U}}}

\newcommand{\E}{\mathbb{E}}

\newcommand{\R}{\mathbb{R}}



%% file: Generative_Competition/1_Introduction.tex
\section{Introduction}
Generative models are no longer developed in isolation. They now operate within competitive, multi-platform markets, where platforms strategically deploy models to attract heterogeneous user groups and compete for market share. For example, Microsoft Azure and Amazon Bedrock compete to license foundation models to enterprises \citep{investopedia2024bedrock,forbes2023huggingface}, Canva and Adobe Firefly compete to attract designers by integrating state-of-the-art generative models \citep{heise2024canva,verge2024canva}, and platforms like Midjourney and Stability AI compete directly for end users seeking creative tools \citep{digitaltrends2025vs,eweek2025creativeClash}. Understanding the behavior of such markets is crucial for guiding governance, policy, and the design of trustworthy AI ecosystems.

Prior work has largely focused on a two-layer market, where model developers also operate the delivery platforms and offer services directly to users \citep{marketAcc,bayesRisk,competitionAI,braessParadox}. Within this setting, a substantial body of literature examines discriminative scenarios in which (binary) classifiers are trained and used by end users. For instance, \citet{marketAcc} show that platforms often prioritize capturing user share over maximizing predictive accuracy, leading to equilibrium states that deviate from socially optimal outcomes. Similarly, \citet{bayesRisk} demonstrate that even when individual classifiers achieve lower Bayes risk, competition can perversely reduce overall social welfare. By contrast, the literature on generative model markets remains sparse, with only a few recent studies analyzing how competition shapes overall welfare outcomes \citep{braessParadox,competitionAI}. Specifically, \citet{braessParadox} identify a counterintuitive phenomenon: adding more models can paradoxically decrease user welfare. More recently, \citet{competitionAI} find that competitive pressures often reduce diversity, though stronger competition can partially mitigate homogenization, and that models performing well in isolation may fail in competitive environments. Collectively, these studies highlight that, in generative markets, improvements in individual model performance do not necessarily translate into greater welfare or diversity.

However, the generative ecosystem is increasingly structured as a three-layer market \citep{Threelayer}: model providers develop models and license them to platforms, which in turn deliver services to end users. Unlike the two-layer setting, where model developers both build and operate the delivery platforms, in the three-layer market, platforms act as intermediaries: they decide which models to adopt and deploy, ultimately shaping how users experience generative AI. For instance, Azure OpenAI Service \citep{AzureAI} supplies GPT-family models through Microsoft Azure, enabling enterprise clients to embed them into a wide range of applications; Cohere \citep{Cohere} provides large language models as APIs for enterprises, serving platforms rather than end users directly; and Canva \citep{Canva} integrates external models such as Stable Diffusion and Leonardo Phoenix into its design suite, allowing millions of users to access generative capabilities without ever selecting the underlying model themselves. In all these cases, platforms are the direct consumers of models, while users experience only the models that platforms choose. 

In this work, we formalize the market as a three-layer \textbf{model-platform-user} game, in which heterogeneous users choose the platform that best aligns with their preferences, while platforms strategically adopt the models from providers that maximize their market share (Fig.~\ref{Fig:system}). We then conduct a rigorous analysis of how platform-level competition influences user welfare, diversity, and equilibrium outcomes. Our main contributions and findings are summarized below:
\begin{enumerate}[leftmargin=*,topsep=0cm,itemsep=-0.0cm]
    \item In Section \ref{sec:model}, we formalize the model-platform-user game and show that when users make hard selections on platforms, the resulting game among platforms may not admit pure Nash equilibria.
    \item In Section \ref{sec:condition}, we identify conditions for the existence of pure Nash equilibria. Crucially, we analyze market structure at equilibrium and derive conditions for both fully differentiated equilibria (all platforms choose distinct models) and homogeneous equilibria (all platforms converge on the same model). We show that market structure is determined not only by models' average performance but also by their deviation advantage to heterogeneous users. 
    \item In Section \ref{sec:welfare}, we analyze market diversity and user welfare. We show that equilibrium may not achieve the socially optimal outcome that maximizes user welfare, and that increasing competition (e.g., adding more platforms or models) does not necessarily improve user welfare or market diversity. This finding aligns with recent observations of growing homogenization in generative model markets \citep{gensurveyhomo,genMonoculture}.
    \item In Section \ref{sec:BR}, we take the model providers’ perspective and design best-response training schemes that allow a provider to introduce a new model effectively into the competitive market.
\item In Section \ref{sec:Ex}, we conduct experiments on both synthetic and real data to validate our theorems.
\end{enumerate}

\begin{figure}[th]
\centering
\includegraphics[width=\linewidth]{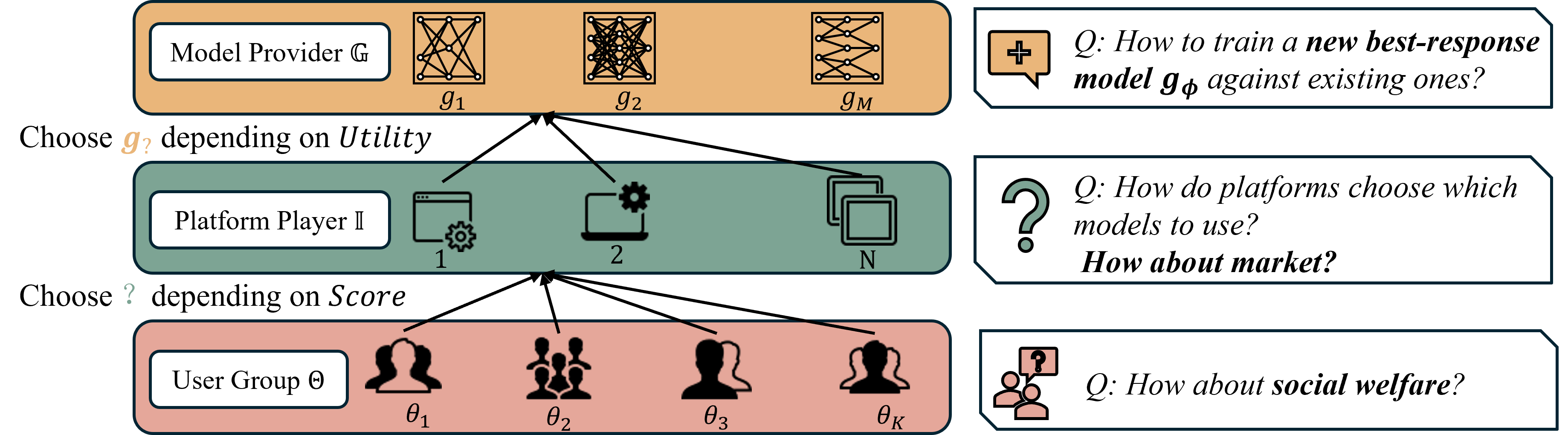}
\caption{The three-layer model-platform-user market structure. Model providers develop generative models, platforms select models to deploy, and heterogeneous users choose platforms.}
\label{Fig:system}
\end{figure}

%% file: Generative_Competition/2_Model.tex
\section{Problem model}
\label{sec:model}

Consider a three-layer generative model service market as shown in Fig.~\ref{Fig:system}, which consists of:

\begin{itemize}[leftmargin=*,topsep=0cm]
    \item \textbf{Model Layer}: A set of generative models $\sG = \{g_1, \dots, g_M\}$ trained by $M$ model providers. Each model $g_i$, $i \in \sM=\{1, \dots, M\}$ is characterized as a data distribution over domain $\mathcal{X}$.  
    \item \textbf{Platform Layer}: A set of platform-players $\sI = \{1, \dots, N\}$. From the provider, each platform $i$ selects a model $f_i \in \sM$ to serve its users. Denote the platform selection as $\vf = (f_{1}, \dots,f_{N})$. 

    \item \textbf{User Layer}:  A population of heterogeneous users categorized into types $\Theta = \{\vtheta_1, \dots, \vtheta_K\} \subseteq \R^{d}$ with distribution $\{\pi_\vtheta\}_{\vtheta \in \Theta}$ such that $\sum_{\vtheta \in \Theta} \pi_\vtheta =1$. For each user type $\vtheta_k$, let $r_{\vtheta_k}(x)$ be the underlying reward function indicating their preference over generated content $x\in\mathcal{X}$.

\end{itemize}

\noindent \textbf{User's choice of platform.}
Given platform selections, users of type $\vtheta$ choose platforms to interact based on the generation ability of selected models. Let $ S_j(\vtheta):= \E_{x \sim g_j} [ r_\vtheta(x) ]$ be \textit{score} measuring the quality of contents generated by model $g_j$ for user type $\vtheta$; higher $ S_j(\vtheta) $ indicates better alignment with user preferences. Then, the probability for users of type $\vtheta$ selecting platform $i$ is given by:
\begin{equation}
\label{eq:usechoice}
p_i(\vtheta):=
    \begin{cases}
    0 & \text{if }f_i \notin \arg \max_{i^{\prime} \in [N]} S_{f_i^{\prime}}(\vtheta) \\
    \frac{1}{\left|\arg \max _{i^{\prime} \in [N]} S_{f_i^{\prime}}(\vtheta) \right|} & \text{if } f_i \in \arg \max _{i^{\prime} \in [N]} S_{f_i^{\prime}}(\vtheta)
    \end{cases}
\end{equation}
That is, users select the platform with the highest score, breaking ties uniformly at random. This hardmax choice model is a standard simplification commonly used in prior work on platform or model selection \citep{bayesRisk,Supply-side,CompetingBandits}. In Section~\ref{sec:softmax}, we discuss how our analysis and results can be extended to the softmax user choice model.

\noindent \textbf{Platform's choice of models.} Each platform $i \in [N]$ strategically selects a model $f_i^*$ from the model-provider to compete for market share, in response to other platforms' selections: \begin{equation}
\label{eq:U}
    {f_{i}^{*}} = \arg \max_{f_i \in \sM} U_i(f_i;{\vf^{*}_{-i}})
\end{equation}
where $U_i(f_i;\vf_{-i}) := \sum_{\vtheta \in \Theta} \pi_\vtheta \cdot p_i(\vtheta) \cdot S_{f_i}(\vtheta)$ is the utility function capturing the market share of platform. Denote profile ${\vf} = \left( {f}_{1}, \cdots, {f}_{N}\right)$ as the strategies of all platforms, and $\vf_{-i}$ strategies of all excluding the platform $i$. We consider a normal-form game $\mathcal G (\sG, \sI,\Theta)$ in which each platform chooses a model to maximize its utility.

\begin{definition}[Nash Equilibrium]
\label{def:NE}
    We say a strategy profile $\vf^{*} = \left( f^{*}_{1}, \cdots, f^{*}_{N}\right)$ is a \textit{pure Nash equilibrium} (PNE) if no platform can improve its utility by unilaterally deviating, i.e., $U_i(f^{*}_i;\vf^{*}_{-i}) \geq U_i(f_i;\vf^{*}_{-i}), \forall i$. A pure Nash equilibrium is \textbf{fully differentiated equilibrium} if all platforms choose distinct models: $\left| \left\{ f^{*}_{1}, \cdots, f^{*}_{N} \right\} \right|= N$. A pure Nash equilibrium is \textbf{homogeneous equilibrium} if all platforms choose the same model: $\left| \left\{ f^{*}_{1}, \cdots, f^{*}_{N} \right\} \right|= 1$. 
\end{definition}

By Nash’s classical theorem, every finite game admits at least one \textit{mixed-strategy} Nash equilibrium, where each player $i$ randomizes over actions according to a probability distribution \citep{potentialGame96}. However, when users make deterministic selections as in Eq.~\ref{eq:usechoice}, we show in Proposition~\ref{pro:NashEq} that a pure Nash equilibrium may not exist.
\begin{restatable}{proposition}{NashEq}[Nonexistence of PNE]
\label{pro:NashEq}
Consider the game $\mathcal G (\sG, \sI,\Theta)$ with finite sets of platforms $\sI$, models $\sG$, and user types $\Theta$, where each platform $i$ chooses a model $f_i \in \sM$ based on Eq.~\ref{eq:usechoice}. The game may not admit a pure-strategy Nash equilibrium $\vf^{*}$.
\end{restatable}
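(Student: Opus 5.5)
The plan is to prove the statement by exhibiting one explicit finite instance of $\mathcal G(\sG,\sI,\Theta)$ with no pure Nash equilibrium. The idea is to encode a Condorcet (rock--paper--scissors) cycle in the user preferences, so that the platforms' best responses cycle and never reach a mutual best response. Only score values matter for the game, since $U_i$ depends on the models only through $S_j(\vtheta)$. So we may specify $S_j(\vtheta)$ directly, and realize these values by any choice of distributions $g_j$ and rewards $r_\vtheta$ (e.g.\ point masses $g_j=\delta_{x_j}$ with $r_{\vtheta}(x_j)$ set to the desired score).

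\textbf{Construction.} Take $N=2$ platforms, $M=3$ models $\{A,B,C\}$, and $K=3$ user types with $\pi_{\vtheta_k}=1/3$. Set the scores cyclically:
\begin{itemize}
\item $\vtheta_1$: $(S_A,S_B,S_C)=(3,2,1)$;
\item $\vtheta_2$: $(S_A,S_B,S_C)=(1,3,2)$;
\item $\vtheta_3$: $(S_A,S_B,S_C)=(2,1,3)$.
\end{itemize}
Each model has the same total score $6$ across types, and each pair of models is strictly ranked by every type, so the only ties in Eq.~\ref{eq:usechoice} arise when both platforms pick the same model.

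\textbf{Best responses.} By the symmetry $A\to B\to C\to A$ (applied simultaneously to types $\vtheta_1\to\vtheta_2\to\vtheta_3$), it suffices to compute platform $2$'s utility against $f_1=A$.
\begin{itemize}
\item Copying $A$ splits every type in half, giving $\tfrac12\cdot\tfrac13(3+1+2)=1$.
\item Choosing $B$ captures only $\vtheta_2$, giving $\tfrac13\cdot 3=1$.
\item Choosing $C$ captures $\vtheta_2$ and $\vtheta_3$, giving $\tfrac13(2+3)=\tfrac53$.
\end{itemize}
Hence the best response to $A$ is uniquely $C$. By the symmetry, the best response to $C$ is uniquely $B$, and the best response to $B$ is uniquely $A$. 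The game is symmetric in the two platforms, so the same best-response map $\mathrm{BR}$ applies to each.

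\textbf{Conclusion.} A profile $(f_1,f_2)$ is a pure Nash equilibrium only if $f_2=\mathrm{BR}(f_1)$ and $f_1=\mathrm{BR}(f_2)$, which would require $\mathrm{BR}(\mathrm{BR}(f_1))=f_1$. But $\mathrm{BR}\circ\mathrm{BR}$ maps $A\mapsto B$, $B\mapsto C$, $C\mapsto A$, so it has no fixed point. Therefore none of the nine profiles is a pure Nash equilibrium, proving the proposition.

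The only delicate step is choosing the scores so that two things hold at once: copying the opponent's model (half of the total score, here $1$) is strictly worse than the best deviation, and the best deviation is unique. Both are ensured by the strict margin $\tfrac53>1$ computed above.
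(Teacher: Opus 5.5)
Your proof is correct and is essentially the paper's own argument. The paper also uses $N=2$ platforms, three equally weighted user types and three models with cyclic (rock--paper--scissors) scores $(0.2,0.1,0)$, $(0,0.2,0.1)$, $(0.1,0,0.2)$, which map to your $(3,2,1)$, $(1,3,2)$, $(2,1,3)$ under $S\mapsto 10S+1$; the paper checks diagonal and off-diagonal profiles directly from the full $3\times 3$ payoff matrix, whereas you use the cyclic symmetry and the fact that $\mathrm{BR}\circ\mathrm{BR}$ has no fixed point, which is only a presentational difference.
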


To prove Proposition~\ref{pro:NashEq}, we provide a counterexample in Appendix~\ref{prosec:NashEq}. The existence of a pure Nash equilibrium can be examined via \textit{best-response dynamics}: starting from any profile $\vf$, platforms sequentially update their strategies as best responses to the current strategies of others. When a pure equilibrium does not exist, this process fails to converge and instead enters cycles. 

\begin{definition}[Best-Response Cycle]
\label{def:BRCycle}

   A \textit{best-response cycle} is a finite sequence of strategy profiles $\vf^{(1)}, \vf^{(2)}, \dots, \vf^{(L)}$ in best-response dynamics such that: (i) at each step $t$, only one platform changes its strategy, and it does so as a best response to $\vf^{(t)}$, and (ii) the sequence eventually returns to the initial profile, i.e., $\vf^{(L+1)} = \vf^{(1)}$.
\end{definition}

\noindent \textbf{User welfare \& market diversity.} Next, we introduce metrics we will use for analyzing the market. 

\begin{definition}[Coverage Value]
\label{def:coveragevalue}
For a strategy profile $\vf = (f_1,\dots,f_N)$ with $f_i \in \sM$. define the \textit{coverage value} of $\vf$ as $ V(\vf):= \sum_{\vtheta\in\Theta} \pi_\vtheta \max_{1 \le i \le N} S_{f_i}(\vtheta)$, i.e., the expected quality users receive from their best available platforms. 
\end{definition}

\begin{definition}[User Welfare]
\label{def:outwelfare}Define \emph{user welfare} $W$ as $W:=V(\vf^{*})$ if the game $\mathcal G (\sG, \sI,\Theta)$ admits a pure Nash equilibrium $\vf^{*}$; and $W:=\frac{1}{L}\sum_{t=1}^L V(\vf^{(t)})$ if it enters best-response cycle.

\end{definition}
 
\begin{definition}[Social Optimum Welfare]
\label{def:optwelfare}
    The \textit{social optimum welfare} is defined as the highest coverage value achievable by any strategy profile. That is, $W_{\mathrm{opt}}:=
    \max_{\vf\in\sM^N} V(\vf)$.
\end{definition}

\begin{definition}[Herfindahl-Hirschman Index (HHI) Diversity]
\label{def:diversity-HHI}
For strategy $\vf = (f_1,\dots,f_N)$ with $f_i \in \sM$, let $(\mu_1, \mu_2, \dots, \mu_N)$ be the \textit{market share vector} of platforms, where $\mu_{i}=\sum_{\vtheta \in \Theta} \pi_{\vtheta} p_{i}(\vtheta)$. \textit{HHI diversity} is defined as $D_{\mathrm{HHI}}(\vf):= \sum_{i=1}^N \mu_i^2$. 
\end{definition}
HHI diversity measures how evenly users are distributed across platforms. Its value lies in $[\tfrac{1}{N}, 1]$, taking value $\tfrac{1}{N}$ when users are evenly split across all platforms (maximum diversity) and $1$ when all users concentrate on a single platform (no diversity). 

\begin{definition}[Support Diversity]
\label{def:diversity-Suup}
For strategy $\vf = (f_1,\dots,f_N)$ with $f_i \in \sM$, the \textit{support diversity} is defined as $D_{\mathrm{supp}}(\vf):= \left | \left\{m \in \sM \mid \exists i \in \sI, f_{i} = m \right\} \right|$ 
\end{definition}
Support diversity measures the number of distinct models adopted by platforms, taking integer values between $1$ and $N$. Larger values indicate that more models are represented in the market, reflecting greater model diversity.

\noindent\textbf{Objectives.} With the platform selection game and evaluation metrics in place, the remainder of this paper aims to address the following key questions about competitive generative model markets:
\begin{itemize}[leftmargin=*,topsep=0cm]
    \item Under what conditions do pure Nash equilibria (PNE) arise, and when do platforms converge to differentiated versus homogeneous structures?
    \item How does the user welfare depend on the number of platforms and models, and how does it compare to the social optimum? 
    \item From the model-provider's perspective, how to design new generative models that can successfully enter the market and be strategically adopted by competing platforms?
\end{itemize}

%% file: Generative_Competition/3_Conditions.tex
\section{Equilibrium Analysis and Market Structure}
\label{sec:condition}
Proposition~\ref{pro:NashEq} showed that a pure Nash equilibrium (PNE) may not exist in the platform selection game. We now investigate the conditions under which a PNE does exist. To facilitate the analysis, we first introduce some basic notations.

\begin{definition}[Average Score]
\label{def:avgScore}Let \textit{average score} of model $g_j$ be defined as
        $
        T_j := \sum_{\vtheta \in \Theta} \pi_{\vtheta} \cdot S_j(\vtheta)
        $
\end{definition}

\begin{definition}[Attraction Term and Deviation Advantage]
\label{def:AttractionDeviation}
    For a strategy profile $\vf = (f_1,\dots,f_N)$ with $f_i\in \sM$, define  $\sA_\vf(\vtheta):=\arg\max_{1 \le i \le N} S_{f_i}(\vtheta)$ as the set of maximizers for a user type $\vtheta$, and let $A_\vf(\vtheta):= |\sA_\vf(\vtheta)|$ be the number of platforms tied for the maximum. Then, the \textit{attraction term} for $f_i$ in strategy $\vf$ is defined as
    \begin{equation}
        Z_{f_i}(\vtheta;\vf) :=
        \begin{cases}
            \dfrac{N - A_\vf(\vtheta)}{A_\vf(\vtheta)} S_{f_i}(\vtheta) & \text{if } f_i \in \sA_\vf(\vtheta)\\
            - S_{f_i}(\vtheta), & \text{otherwise}
        \end{cases}
    \end{equation}
    The \textit{deviation advantage} for $f_i$ under strategy $\vf$ is defined as
    \begin{equation}
        \delta_{f_i}(\vf) := \sum_{\vtheta \in \Theta} \pi_{\vtheta} \cdot Z_{f_i}(\vtheta;\vf)
    \end{equation}
\end{definition}
Intuitively, the attraction term $Z_{f_i}(\vtheta;\vf)$ quantifies how much $f_i$ benefits when it is among the winners for type $\vtheta$, and how much it loses otherwise. Aggregated across all user types, the deviation advantage $\delta_{f_i}(\vf)$ represents the net gain of $f_i$ relative to its competitors under the current strategy.

\begin{restatable}{proposition}{UtilityDecomposition}[Utility Decomposition.]
\label{pro:UtilityDecomposition} 
    The expected utility $U_i\left(f_i;\vf_{-i}\right)$ of platform $i$ in Eq.~\ref{eq:U} can be decomposed into $T_{{f_i}}$ and $\delta_{{f_i}}(\vf)$  as:
    \begin{equation}
        U_i\left(f_i;\vf_{-i}\right) = \frac{1}{N} \left( T_{f_i}+\delta_{f_i}(\vf) \right).
    \end{equation}

\end{restatable}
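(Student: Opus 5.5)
The identity is linear in the per-type contributions, so the plan is to verify it type by type. I will show that for every $\vtheta\in\Theta$,
\[
p_i(\vtheta)\, S_{f_i}(\vtheta) = \frac{1}{N}\bigl(S_{f_i}(\vtheta) + Z_{f_i}(\vtheta;\vf)\bigr).
\]
Multiplying by $\pi_\vtheta$ and summing over $\Theta$ then gives the left-hand side $U_i(f_i;\vf_{-i})$. On the right, the $S_{f_i}$ terms sum to $T_{f_i}$ by Definition~\ref{def:avgScore}, and the $Z_{f_i}$ terms sum to $\delta_{f_i}(\vf)$ by Definition~\ref{def:AttractionDeviation}.

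First I will reconcile notation. In Eq.~\ref{eq:usechoice}, platform $i$ receives positive probability exactly when its score $S_{f_i}(\vtheta)$ attains $\max_{i'} S_{f_{i'}}(\vtheta)$. This is precisely the condition $i\in\sA_\vf(\vtheta)$, with the argmax taken over platform indices. In that case the tie count in Eq.~\ref{eq:usechoice} equals $A_\vf(\vtheta)$. I will read the condition ``$f_i\in\sA_\vf(\vtheta)$'' in the definition of $Z$ as this same event, since several platforms may share a model and the argmax counts platforms. This identification is the only delicate point, so I will state it explicitly.

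The per-type identity then splits into two cases.

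\emph{Case 1: platform $i$ is a maximizer.} Here $p_i(\vtheta)=1/A_\vf(\vtheta)$, and the right-hand side equals
\[
\frac{1}{N}\, S_{f_i}(\vtheta)\Bigl(1+\frac{N-A_\vf(\vtheta)}{A_\vf(\vtheta)}\Bigr)=\frac{1}{N}\cdot\frac{N}{A_\vf(\vtheta)}\, S_{f_i}(\vtheta)=\frac{S_{f_i}(\vtheta)}{A_\vf(\vtheta)},
\]
which matches the left-hand side.

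\emph{Case 2: platform $i$ is not a maximizer.} Here $p_i(\vtheta)=0$ and $Z_{f_i}(\vtheta;\vf)=-S_{f_i}(\vtheta)$, so both sides vanish.

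Summing the per-type identity with weights $\pi_\vtheta$ completes the proof. There is no real obstacle beyond the bookkeeping on the argmax over platforms versus models.
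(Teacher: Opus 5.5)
Your proof is correct and follows essentially the same route as the paper: you establish the per-type identity $N\,p_i(\vtheta)\,S_{f_i}(\vtheta)=S_{f_i}(\vtheta)+Z_{f_i}(\vtheta;\vf)$ by the same two cases (maximizer or not), then sum with weights $\pi_\vtheta$ and divide by $N$. Your explicit decision to count ties over platforms rather than over distinct models is the right one, and it is slightly more careful than the paper's proof: that proof describes $A_\vf(\vtheta)$ as the number of tied models, which read literally would give each platform a share of $1$ instead of $1/N$ in a homogeneous profile.
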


\begin{restatable}{lemma}{MuEqCon}[Existence of Equilibrium]
\label{lem:MuEqCon}
    Consider the game $\mathcal G (\sG, \sI,\Theta)$ with finite user types $\Theta$, and $N$ platforms choosing from $M$ models $\sG$, where $M \geq N$. 
    A \textbf{fully differentiated equilibrium} $\vf^{*} = (f^{*}_1,\dots,f^{*}_N)$ exists if and only if for every platform $i$ and every alternative model $f_i\in\sG\setminus\{f^{*}_i\}$,
    \begin{equation}
        T_{f^{*}_i} - T_{f_i} \ge\delta_{f_i}(\vf^{*}_{-i} \cup f_i) - \delta_{f^{*}_i}(\vf^{*})
    \end{equation}
    
    A \textbf{homogeneous equilibrium} $\vf^{*} = (f^{*}_1,\dots,f^{*}_N)$, $f^{*}_i=m$ exists if and only if  for some $m\in\sM$, 
    \begin{equation}
        T_{m} - T_{f_i} \ge\delta_{f_i}(\vf^{*}_{-m} \cup f_i) - \delta_{m}(\vf^{*})
    \end{equation}

\end{restatable}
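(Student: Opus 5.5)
The argument reduces the Nash condition of Definition~\ref{def:NE} to the stated inequalities using the Utility Decomposition (Proposition~\ref{pro:UtilityDecomposition}). The key observation is that $U_i(f_i;\vf^{*}_{-i})$ depends on $\vf^{*}_{-i}$ and $f_i$ only through the full profile $(\vf^{*}_{-i},f_i)$, which we write as $\vf^{*}_{-i}\cup f_i$. Proposition~\ref{pro:UtilityDecomposition} applied to this profile gives
\begin{equation*}
U_i(f_i;\vf^{*}_{-i})=\tfrac{1}{N}\bigl(T_{f_i}+\delta_{f_i}(\vf^{*}_{-i}\cup f_i)\bigr).
\end{equation*}
Applied to the candidate profile itself, it gives
\begin{equation*}
U_i(f^{*}_i;\vf^{*}_{-i})=\tfrac{1}{N}\bigl(T_{f^{*}_i}+\delta_{f^{*}_i}(\vf^{*})\bigr).
\end{equation*}
Since $N>0$, the PNE inequality $U_i(f^{*}_i;\vf^{*}_{-i})\ge U_i(f_i;\vf^{*}_{-i})$ is equivalent to $T_{f^{*}_i}+\delta_{f^{*}_i}(\vf^{*})\ge T_{f_i}+\delta_{f_i}(\vf^{*}_{-i}\cup f_i)$. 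Rearranging gives exactly the displayed condition $T_{f^{*}_i}-T_{f_i}\ge \delta_{f_i}(\vf^{*}_{-i}\cup f_i)-\delta_{f^{*}_i}(\vf^{*})$. For $f_i=f^{*}_i$ the inequality holds trivially, so it suffices to quantify over $f_i\in\sG\setminus\{f^{*}_i\}$.

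For the fully differentiated case, I take a profile $\vf^{*}$ with $N$ distinct models, which is possible because $M\ge N$. By Definition~\ref{def:NE}, $\vf^{*}$ is a PNE exactly when the rearranged inequality holds for every $i$ and every alternative $f_i$. This gives both directions of the ``if and only if'' at once, because each step above is an equivalence. Distinctness enters only in classifying the equilibrium as fully differentiated; it plays no role in the equivalence itself.

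For the homogeneous case, I set $f^{*}_i=m$ for all $i$. Then $\vf^{*}_{-i}$ is the same for every $i$ (namely $N-1$ copies of $m$), which is what the notation $\vf^{*}_{-m}\cup f_i$ records. The condition therefore need only be checked for one representative platform and every $f_i\neq m$. The deviation profile $\vf^{*}_{-m}\cup f_i$ has $N-1$ platforms on $m$ and one on $f_i$. Its attraction terms follow from Definition~\ref{def:AttractionDeviation}: the tie count is $A=N-1$, $1$, or $N$ according to whether $m$ beats, loses to, or ties $f_i$ for type $\vtheta$. These values are not needed for the equivalence, but they help interpret the condition. Existence ``for some $m$'' is then the disjunction over candidate homogeneous profiles.

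The main obstacle is not algebraic. It is making sure Proposition~\ref{pro:UtilityDecomposition} applies verbatim to the deviated profile. The tie structure $A_{\vf}(\vtheta)$, and hence $Z$, must be recomputed for $\vf^{*}_{-i}\cup f_i$ rather than inherited from $\vf^{*}$. In particular, $f_i$ may coincide with a model already used by another platform, producing ties that $\vf^{*}$ did not have. I would first state explicitly that the decomposition holds for an arbitrary profile, including ones with repeated models. This is how the proposition is stated, but I would double-check it at tied types: a tied platform receives $\pi_\vtheta S/A$, and this should equal $\tfrac1N\pi_\vtheta\bigl(S+\tfrac{N-A}{A}S\bigr)$, which it does. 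For a losing platform, $\tfrac1N\pi_\vtheta(S-S)=0$, which is also correct.
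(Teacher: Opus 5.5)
Your proposal is correct and follows essentially the same proof as the paper: fix the candidate profile, apply the utility decomposition (Proposition~\ref{pro:UtilityDecomposition}) to both $\vf^{*}$ and the deviated profile $\vf^{*}_{-i}\cup f_i$, and rearrange, with the homogeneous case obtained by setting every $f^{*}_i=m$. Your extra check is a worthwhile addition: it confirms that the decomposition holds at tied types when $A_{\vf}(\vtheta)$ counts tied \emph{platforms}, as in Definition~\ref{def:AttractionDeviation}, which matters because a deviation can create repeated models and the appendix proof of the decomposition instead phrases $A_{\vf}$ in terms of tied models.
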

The proof is given in Section~\ref{prosec:MuEqCon}. Lemma~\ref{lem:MuEqCon} shows that market structure is determined by the balance between average performance $T$ and the deviation advantage $\delta$. When average performance is uniformly strong across models, platforms tend to converge on the single best model, where performance alone cannot sustain multi-model entry. In contrast, when models differ in whom they serve best, even uniformly weak models can secure adoption as platforms specialize, yielding a differentiated market. To illustrate this, we provide an example in Fig.~\ref{Fig:ex1}. 

Lemma \ref{lem:MuEqCon} also implies that market structure depends on the user distribution. Building on this, Corollary~\ref{col:Centralizationhomogeneous} shows that when the user distribution is centralized, i.e., a single user type constitutes a large fraction of the population, the market tends to converge to a homogeneous structure.

    \begin{figure}[t]
    
        \centering
        \begin{minipage}{0.48\linewidth}
            \centering
            \textbf{Scenario A:} fully differentiated equilibrium\\
            $\pi_{\vtheta_A} = \pi_{\vtheta_B} = 0.5$\\
            \smallskip
            \begin{tabular}{c|cc}
                 & $S_1(\vtheta)$ & $S_2(\vtheta)$ \\ \hline
                $\vtheta_A$ & 0.90 & 0.85 \\
                $\vtheta_B$ & 0.35 & 0.80 
            \end{tabular}\\
            $\vf^{*}=(g_1, g_2)$
        \end{minipage}
        \hfill
        \begin{minipage}{0.48\linewidth}
            \centering
            \textbf{Scenario B:} homogeneous equilibrium\\
            $\pi_{\vtheta_A} = \pi_{\vtheta_B} = 0.5$\\
            \smallskip
            \begin{tabular}{c|cc}
              & $S_1(\vtheta)$ & $S_2(\vtheta)$ \\ \hline
                $\vtheta_A$ & 0.60 & 0.70 \\
                $\vtheta_B$ & 0.65 & 0.95 \\
            \end{tabular}\\
            $\vf^{*}=(g_2, g_2)$
        \end{minipage}
        \caption{Two scenarios with the same average score gap $|T_2 - T_1| = 0.20$ but different deviation advantage $(\delta_{1}, \delta_{2})$ for $N=2$ and $M=2$, resulting in opposite equilibrium outcomes. In Scenario A, although $g_1$ has a lower average score ($T_1 < T_2$), it is still chosen in equilibrium because its strong advantage on the high-weight type $\vtheta_A$ satisfies the differentiation condition. Removing this type-specific advantage in Scenario B breaks the condition, leading to a homogeneous equilibrium on $g_2$. These scenarios demonstrate that market structure is not determined solely by average performance; a strong local advantage in high-weight user segments can sustain a model’s presence in equilibrium. Full calculation details are provided in Section~\ref{sec:calex1}.}
        \label{Fig:ex1}
    \end{figure}

\begin{restatable}[High User Centralization $\Rightarrow$ Homogeneous Equilibrium]{corollary}{Centralizationhomogeneous}
\label{col:Centralizationhomogeneous}
Assume there exists a dominant user type $\vtheta^\star$ with fraction $\pi_{\vtheta}^{\star}$ and a model $m$ satisfying:
$\forall j\neq m$, $S_m(\vtheta^\star)-S_j(\vtheta^\star) \ge \rho > 0$ and $\forall \vtheta \neq \vtheta^\star, j\neq m$, $|S_j(\vtheta)-S_m(\vtheta)| \le \Gamma$. If $\pi_{\vtheta}^{\star}$ is sufficiently large and satisfies  
$$
\pi_{\vtheta}^{\star} \ \ge\ 1-\frac{1}{1+2\frac{\Gamma}{\rho}}
$$
then the homogeneous strategy $\vf^* = (m,\dots,m)$ is a pure-strategy Nash equilibrium.
\end{restatable}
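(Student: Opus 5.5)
The plan is to verify the homogeneous branch of Lemma~\ref{lem:MuEqCon} for the profile $\vf^*=(m,\dots,m)$, combined with Proposition~\ref{pro:UtilityDecomposition}. Everything then reduces to comparing two quantities for an arbitrary deviation $j\neq m$ by a single platform: the average-score gap $T_m-T_j$ and the deviation-advantage gap $\delta_j(\vf^*_{-m}\cup j)-\delta_m(\vf^*)$. Because every user type gives a tie among all $N$ platforms under $\vf^*$, we have $A_{\vf^*}(\vtheta)=N$ for every type. Hence $Z_m(\vtheta;\vf^*)=0$ for all $\vtheta$, and so $\delta_m(\vf^*)=0$. The condition of Lemma~\ref{lem:MuEqCon} therefore becomes $T_m-T_j\ge \delta_j(\vf^*_{-m}\cup j)$ for every $j\ne m$.

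Step one lower-bounds the average-score gap by splitting $\Theta$ into $\vtheta^\star$ and the rest. The dominance assumption at $\vtheta^\star$ and the $\Gamma$-closeness elsewhere give
\begin{equation*}
T_m-T_j=\pi_{\vtheta}^{\star}\bigl(S_m(\vtheta^\star)-S_j(\vtheta^\star)\bigr)+\sum_{\vtheta\ne\vtheta^\star}\pi_\vtheta\bigl(S_m(\vtheta)-S_j(\vtheta)\bigr)\ \ge\ \pi_{\vtheta}^{\star}\rho-(1-\pi_{\vtheta}^{\star})\Gamma .
\end{equation*}

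Step two upper-bounds $\delta_j$ in the deviated profile. At $\vtheta^\star$ the deviator strictly loses, since $S_j(\vtheta^\star)<S_m(\vtheta^\star)$. So $Z_j(\vtheta^\star)=-S_j(\vtheta^\star)\le 0$, and this term can be dropped or kept as a favourable negative contribution. For $\vtheta\ne\vtheta^\star$ there are three cases.
\begin{itemize}
\item If $j$ loses, the contribution is negative.
\item If $j$ ties, then $A=N$ and the contribution is $0$.
\item If $j$ wins alone, the contribution is positive, but it occurs only where $S_j(\vtheta)>S_m(\vtheta)$, i.e.\ within a margin of at most $\Gamma$ of $m$.
\end{itemize}
The aim is to show that the total positive part is at most $(1-\pi_{\vtheta}^{\star})\Gamma$. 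The route is to pair the winning term with the $-S_m$-type losses it displaces, so that only the excess $S_j-S_m\le\Gamma$ survives on the non-dominant mass. Combining the two steps, the equilibrium condition holds as soon as $\pi_{\vtheta}^{\star}\rho\ge 2(1-\pi_{\vtheta}^{\star})\Gamma$. Rearranged, this is exactly $\pi_{\vtheta}^{\star}\ge 1-\frac{1}{1+2\Gamma/\rho}$. Lemma~\ref{lem:MuEqCon} then yields that $(m,\dots,m)$ is a PNE.

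The main obstacle is Step two. Literally, $Z_j=(N-1)S_j(\vtheta)$ when $j$ wins alone, which scales with $N$ and with the raw score level rather than with the gap $\Gamma$. To get the clean bound $(1-\pi_{\vtheta}^{\star})\Gamma$, I would first try to work directly with utilities rather than through $\delta$, i.e.\ compute $N\bigl(U_i(j;\vf^*_{-i})-U_i(m;\vf^*_{-i})\bigr)$. In that expression the deviator's gains on non-dominant types are offset against the $\frac1N S_m$ share it gives up, and the goal is to bound the net by score differences. If that offsetting is not tight enough for general $N$, I would fall back on the reading implicit in the statement's $N$-free threshold. That reading is the two-platform case, or equivalently scores measured relative to $m$ so that only differences $S_j-S_m$ enter. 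Under it, the winning-term contribution is bounded by $\Gamma$ per unit of non-dominant mass, and the argument closes.
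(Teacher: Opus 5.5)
Your reduction to the homogeneous branch of Lemma~\ref{lem:MuEqCon}, the observation $\delta_m(\vf^*)=0$, and Step one are all correct. Your diagnosis of Step two is also exactly right. The obstacle, however, cannot be removed by a sharper pairing: with the stated threshold the corollary is false.

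If one platform deviates from $(m,\dots,m)$ to $j$, its utility changes by
\[
\sum_{\vtheta:\,S_j(\vtheta)>S_m(\vtheta)}\pi_\vtheta\Bigl(S_j(\vtheta)-\tfrac{1}{N}S_m(\vtheta)\Bigr)\;-\;\frac{1}{N}\sum_{\vtheta:\,S_j(\vtheta)<S_m(\vtheta)}\pi_\vtheta S_m(\vtheta).
\]
On the types where $j$ wins, the gain is $(S_j-S_m)+\tfrac{N-1}{N}S_m$. It is governed by the score level, not by $\Gamma$.

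For a concrete counterexample, take $N=2$, models $m$ and $j$, and types $\vtheta^\star$ (weight $1/4$) and $\vtheta'$ (weight $3/4$). Set $S_m(\vtheta^\star)=1$, $S_j(\vtheta^\star)=0$, $S_m(\vtheta')=0.9$, $S_j(\vtheta')=1$. Then $\rho=1$, $\Gamma=0.1$, and the threshold is $1-1/1.2=1/6<1/4$. Yet each platform earns $\tfrac12 T_m=0.4625$ at $(m,m)$, while deviating to $j$ earns $0.75$. Equivalently, $T_m-T_j=0.175<\delta_j=0.75$.

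The pairing you propose cancels the winner's term against the $-S_m$ losses it displaces. That is the cancellation behind Proposition~\ref{pro:CoverageValue} for the coverage value, which sums over all platforms. Those $-S_m$ terms belong to the non-deviators' $\delta_m$, not to the deviator's payoff, so they do not enter its incentive.

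Your fallbacks do not rescue the statement either. The example above already has $N=2$, so the two-platform reading is still false. Measuring scores relative to $m$ is not a normalization. Adding a constant $c$ to all models' scores for one type $\vtheta$ leaves every winner set unchanged, but it shifts the deviation gain by $\pi_\vtheta c\,(1-1/N)$ when $j$ wins that type. So this reading is really the extra hypothesis $S_m\equiv 0$ off $\vtheta^\star$. Even then, with nonnegative scores, the display gives the requirement $\pi_{\vtheta}^{\star}\rho\ge N(1-\pi_{\vtheta}^{\star})\Gamma$. This coincides with the stated threshold at $N=2$ and is strictly stronger for $N\ge 3$.

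The paper's own proof has the same hole. It asserts that the gain on types where the deviator wins is at most $(1-\pi_{\vtheta}^{\star})\Gamma/N$, which is precisely the unjustified step you flagged. A correct version must involve $N$ and the score level $\bar S:=\max_{\vtheta\ne\vtheta^\star}S_m(\vtheta)$. For nonnegative scores, the display shows that $\pi_{\vtheta}^{\star}\rho\ge(1-\pi_{\vtheta}^{\star})\bigl(N\Gamma+(N-1)\bar S\bigr)$ suffices.
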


\begin{wrapfigure}{}{0.3\textwidth}

\centering
\includegraphics[width=0.28\textwidth]{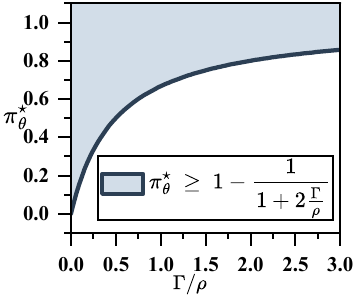}
\caption{$\pi_{\vtheta}^{\star}$ versus $\frac{\Gamma}{\rho}$.}
\label{fig:desmos}

\end{wrapfigure}
Intuitively, $\rho$ measures strength of the ``majority advantage" of model, $\Gamma$ measures the maximum performance difference outside the majority group, and $\frac{\Gamma}{\rho}$ measures the ``relative strength of minority variation to majority advantage."
From Corollary~\ref{col:Centralizationhomogeneous}, increasing $\pi_{\vtheta}^{\star}$ effectively lowers the threshold of $\rho$ needed for homogeneous equilibrium. So even a small quality advantage $\rho$ on the dominant user type can outweigh potential gains from minority types (bounded by $\Gamma$), allowing a single model to dominate the entire market. 
To illustrate the parameter regime in which Corollary~\ref{col:Centralizationhomogeneous} applies, Fig.~\ref{fig:desmos} plots the $\pi_{\vtheta}^{\star}$ against $\frac{\Gamma}{\rho}$ , the shaded region shows the parameter range where the homogeneous strategy $\vf^* = (m,\dots,m)$ is a PNE.

%% file: Generative_Competition/4_Welfare.tex
\section{User Welfare and Market  Diversity}
\label{sec:welfare}

\begin{restatable}{proposition}{CoverageValue}[Coverage Value Calculation]
\label{pro:CoverageValue}
    Given a strategy profile $\vf=(f_1,\dots,f_N)$, the coverage value  in Definition~\ref{def:coveragevalue} can be written as:
    $$
    V(\vf) =\frac{1}{N}\sum_{i=1}^N \left( T_{f_i} + \delta_{f_i}(\vf) \right)
    $$
    where $T$ and $\delta$ are defined in Definitions~\ref{def:avgScore} and~\ref{def:AttractionDeviation}, respectively.
\end{restatable}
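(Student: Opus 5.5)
The plan is to use Proposition~\ref{pro:UtilityDecomposition} and then sum over platforms. That proposition gives, for every platform $i$,
$$U_i(f_i;\vf_{-i}) = \frac{1}{N}\left(T_{f_i}+\delta_{f_i}(\vf)\right).$$
Summing over $i=1,\dots,N$ turns the right-hand side of the claimed identity into $\sum_{i=1}^N U_i(f_i;\vf_{-i})$. It therefore suffices to show that the total utility over all platforms equals the coverage value $V(\vf)$.

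To do this, I will expand the definition and swap the order of summation:
$$\sum_{i=1}^N U_i = \sum_{\vtheta\in\Theta}\pi_\vtheta \sum_{i=1}^N p_i(\vtheta)\, S_{f_i}(\vtheta).$$
Fix a type $\vtheta$. By Eq.~\ref{eq:usechoice}, $p_i(\vtheta)$ vanishes unless $i\in\sA_\vf(\vtheta)$, in which case it equals $1/A_\vf(\vtheta)$. For every such $i$, $S_{f_i}(\vtheta)=\max_{i'} S_{f_{i'}}(\vtheta)$. The inner sum is then $A_\vf(\vtheta)\cdot\frac{1}{A_\vf(\vtheta)}\max_{i'}S_{f_{i'}}(\vtheta)$, which is exactly the maximum. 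Weighting by $\pi_\vtheta$ and summing over types gives $V(\vf)$ from Definition~\ref{def:coveragevalue}.

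As a sanity check independent of Proposition~\ref{pro:UtilityDecomposition}, I could verify the identity directly at the level of each type. Summing $S_{f_i}(\vtheta)+Z_{f_i}(\vtheta;\vf)$ over all $i$ splits into two groups:
\begin{itemize}
\item Winners ($i\in\sA_\vf(\vtheta)$): each contributes $\frac{N}{A}S^{\max}$, and there are $A$ of them, for a total of $N S^{\max}$.
\item Non-winners: each contributes $S_{f_i}-S_{f_i}=0$.
\end{itemize}
Dividing by $N$ and averaging over $\pi$ recovers $V(\vf)$.

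I expect no real obstacle. The only point needing care is that the maximizer set is defined in terms of platforms rather than models. When several platforms pick the same model $m$, each of them counts separately in $A_\vf(\vtheta)$ and in the sum over $i$. I will keep the indexing over platforms $i$ throughout, interpreting ``$f_i\in\sA_\vf(\vtheta)$'' as $i\in\sA_\vf(\vtheta)$, so that this counting stays consistent.
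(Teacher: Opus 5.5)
Your proposal is correct, but your main argument is organized differently from the paper's. You derive the identity as a corollary: you sum Proposition~\ref{pro:UtilityDecomposition} over platforms, then show separately that under the hardmax rule each type's shares $p_i(\vtheta)$ sum to one and sit only on maximizers, so $\sum_i U_i(f_i;\vf_{-i}) = V(\vf)$. The paper does not invoke the utility decomposition here. It fixes $\vtheta$, sums $S+Z$ directly, and observes that non-maximizers contribute $0$ while each of the $A_\vf(\vtheta)$ maximizers contributes $\frac{N}{A_\vf(\vtheta)}\max S$. That is precisely your ``sanity check.''

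Your route puts the economic content $V=\sum_i U_i$ at the center; the paper only remarks on this after the statement. It also makes clear why the identity becomes an inequality under softmax choice, where $\sum_i p^{\text{soft}}_i(\vtheta)S_{f_i}(\vtheta)\le \max_i S_{f_i}(\vtheta)$. The paper's route is shorter and purely algebraic from the definition of $Z$.

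Your caveat about indexing is well placed. The paper's proof sums over the set $\sM(\vf)$ of distinct models in use, with $A_\vf(\vtheta)$ counting tied models. It then rewrites that sum as one over $i=1,\dots,N$, which is only legitimate when the platforms' models are distinct. For a homogeneous profile $(m,\dots,m)$, the model-counting reading gives $A_\vf(\vtheta)=1$ and $\delta_m=(N-1)T_m$, so the right-hand side is $NT_m$ instead of $V=T_m$. Counting maximizers as platforms, as you do, gives $A_\vf(\vtheta)=N$, $\delta_m=0$, and the correct value. This is also what Definition~\ref{def:AttractionDeviation} and Eq.~\ref{eq:usechoice} actually specify, so your version holds for every profile.
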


Proposition \ref{pro:CoverageValue} with proof in Section~\ref{subsec:CoverageValue} provides a closed-form expression for the coverage value of any strategy profile $\vf$, showing that the sum of individual platform utilities equals the coverage value, i.e., $V(\vf) = \sum_{i=1}^N U_i(\vf)$. However, under competition, self-interested platforms that each maximize their own utility $U_i$ do not necessarily achieve optimal user welfare, as discussed below.

\begin{restatable}{lemma}{WelfareUpper}
\label{lem:WelfareUpper}
Let $W$ denote the user welfare (Definition~\ref{def:outwelfare})  achieved under the game $\mathcal G (\sG, \sI,\Theta)$, and $W_{\mathrm{opt}}$ the social optimum welfare (Definition~\ref{def:optwelfare}). Then, it always holds that $W \le W_{\mathrm{opt}}$.

\end{restatable}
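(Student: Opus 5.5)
The argument is a direct comparison of definitions, split according to the two cases in Definition~\ref{def:outwelfare}. The key observation is that $W_{\mathrm{opt}}=\max_{\vf\in\sM^N}V(\vf)$ is a maximum over \emph{all} strategy profiles. Every profile that can appear in the game, whether as a pure Nash equilibrium or along a best-response trajectory, is an element of $\sM^N$. Hence $V(\vf)\le W_{\mathrm{opt}}$ for every such profile, and the maximum is attained because $\sM^N$ is finite.

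\emph{Case 1 (a PNE exists).} Here $W=V(\vf^{*})$ for the equilibrium profile $\vf^{*}\in\sM^N$. So $W\le \max_{\vf\in\sM^N}V(\vf)=W_{\mathrm{opt}}$ immediately. If several equilibria exist, the same bound holds for whichever one defines $W$.

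\emph{Case 2 (best-response cycle).} Here $W=\frac1L\sum_{t=1}^L V(\vf^{(t)})$. Each profile in the cycle satisfies $\vf^{(t)}\in\sM^N$, so $V(\vf^{(t)})\le W_{\mathrm{opt}}$. Averaging $L$ terms, each bounded by $W_{\mathrm{opt}}$, gives $W\le W_{\mathrm{opt}}$.

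As a sanity check, one could instead use Proposition~\ref{pro:CoverageValue}, which writes $V(\vf)=\sum_i U_i(\vf)$. This is not needed: the inequality is purely a max-versus-particular-element comparison and holds regardless of the equilibrium structure.

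There is no real obstacle. The only point needing care is that $W$ is well defined, meaning one commits to a specific equilibrium or a specific cycle. The bound holds uniformly over all such choices, so this ambiguity does not affect the conclusion.
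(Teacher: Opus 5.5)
Your proposal is correct and follows the paper's proof: both split on the two cases of Definition~\ref{def:outwelfare}, bound $V(\vf^{*})$ directly by $\max_{\vf}V(\vf)$, and bound the cycle average by noting that each term is at most $W_{\mathrm{opt}}$. Your inequality has the right direction; the paper's final boxed line reads $W \ge W_{\mathrm{opt}}$, which is a typo.
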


Note that the equality in Lemma~\ref{lem:WelfareUpper} holds only in the degenerate cases: when $\vf^{*} \in \arg\max_{\vf}V(\vf)$ or when every $\vf^{(t)}$ in the best-response cycle attains the maximum welfare value. Such situations rarely occur in competitive markets, highlighting the misalignment between platform incentives and user welfare as the social objective. We provide an example in Section~\ref{sec:ex2}. 

Next, we examine the impact of the number of models and platforms on the market. Intuitively, enlarging the model pool $\sG$ or increasing the number of platforms $N$ might be expected to promote competition and enhance user welfare and market diversity. However, our counterexamples show that neither approach is reliably effective. As illustrated in Fig.~\ref{Fig:ex-modelincrease}, expanding the model pool can introduce a uniformly strong model, pulling the market toward homogenization and reducing welfare for minority users, as shown in Corollary~\ref{col:Centralizationhomogeneous}. Similarly, adding platforms can be counterproductive: strategic interactions may induce best-response cycles or lead platforms to adopt weaker models to avoid competition, thereby lowering welfare. These results demonstrate that welfare and diversity are not monotone in competition intensity. Nonetheless, we can identify sufficient conditions under which platform entry does not reduce welfare or diversity, as detailed in Proposition~\ref{pro:EquilibriumWelfare}.

\begin{restatable}{proposition}{EquilibriumWelfare}
\label{pro:EquilibriumWelfare}
Consider a game $\mathcal G (\sG, \sI,\Theta)$ with an equilibrium $\vf^{*}$. Let $\widehat{\mathcal G} (\sG,\sI':= \sI \cup \{i^{+}\},\Theta)$ be another game with one additional platform added. Suppose there exists a model $h\in \sG$ and an incumbent equilibrium strategy $\widehat\vf$ from $\vf^{*}$ such that the extended profile
$\widehat\vf:= (\vf^{*},h)$ satisfies the best-response conditions: (i) the best response to $\vf^{*}$ is $h$; (ii) no incumbent platform has a profitable deviation against $\widehat\vf$. Then $\widehat\vf$ is an equilibrium of the game $\widehat{\mathcal G}$. Furthermore, the user welfare and market diversity in $\widehat{\mathcal G}$ are at least as high as in $\mathcal G$, i.e., $\widehat{W} \ge W$ and $\widehat{D}_{\mathrm{supp}} \ge D_{\mathrm{supp}}$.
\end{restatable}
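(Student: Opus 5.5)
The plan has three parts: equilibrium, welfare, and support diversity. Each follows almost directly from the definitions once the two best-response conditions are read off correctly.

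\textbf{Equilibrium.} We must check that no platform in $\sI' = \sI\cup\{i^{+}\}$ has a profitable unilateral deviation from $\widehat\vf=(\vf^{*},h)$.
For the entrant $i^{+}$, its utility against $\widehat\vf_{-i^{+}}=\vf^{*}$ is $U_{i^{+}}(\cdot;\vf^{*})$. Condition (i) says $h$ maximizes this over $\sM$, so $i^{+}$ cannot improve.
For every incumbent $i\in\sI$, condition (ii) states that $U_i(\widehat f_i;\widehat\vf_{-i})\ge U_i(f_i;\widehat\vf_{-i})$ for all $f_i\in\sM$.
Together these are exactly Definition~\ref{def:NE} in the game $\widehat{\mathcal G}$, so $\widehat\vf$ is a PNE. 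I would note explicitly that all utilities here are computed with the $(N+1)$-platform choice rule in Eq.~\ref{eq:usechoice}, i.e., with ties split among $N+1$ platforms. This matters only for bookkeeping, since (i) and (ii) are assumed in that game.

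\textbf{Welfare.} Both games have pure equilibria, so Definition~\ref{def:outwelfare} gives $W=V(\vf^{*})$ and $\widehat W=V(\widehat\vf)$. (If $\widehat{\mathcal G}$ has several equilibria, we take $\widehat W$ to be evaluated at this $\widehat\vf$.)
By Definition~\ref{def:coveragevalue}, $V$ depends only on the set of deployed models, through $\max_i S_{f_i}(\vtheta)$ for each type. Adding a coordinate can only enlarge the set over which the max is taken, so pointwise in $\vtheta$
\[
\max\Big(\max_{i\in\sI} S_{f^{*}_i}(\vtheta),\, S_h(\vtheta)\Big)\ \ge\ \max_{i\in\sI} S_{f^{*}_i}(\vtheta).
\]
Summing with weights $\pi_\vtheta\ge 0$ gives $\widehat W\ge W$.
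I deliberately avoid the decomposition in Proposition~\ref{pro:CoverageValue}: the factor $1/N$ and the $\delta$ terms change with $N$, so comparing across games that way is messier. Monotonicity of the max suffices.

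\textbf{Diversity.} $D_{\mathrm{supp}}(\widehat\vf)=|\{f^{*}_1,\dots,f^{*}_N\}\cup\{h\}|\ge|\{f^{*}_1,\dots,f^{*}_N\}|=D_{\mathrm{supp}}(\vf^{*})$. Equality holds exactly when $h$ is already deployed by some incumbent.

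\textbf{Main obstacle.} The only subtle point is interpretive: the statement's phrase ``incumbent equilibrium strategy $\widehat\vf$ from $\vf^{*}$'' has to be read correctly. I would fix it as saying that incumbents keep their strategies $\vf^{*}$ unchanged. Under that reading the coverage comparison is between nested model sets and the argument above goes through.
If instead incumbents were allowed to re-optimize to some new profile, the welfare inequality could fail. In that case I would need an extra assumption, namely that the new profile's model set still contains $\{f^{*}_i\}$, and I would state this assumption explicitly.
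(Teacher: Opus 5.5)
Your proposal is correct and follows the paper's argument: conditions (i) and (ii) directly give the PNE in $\widehat{\mathcal G}$, and welfare follows from pointwise monotonicity of $\max_k S_k(\vtheta)$ over the nested model sets $\sM(\vf^{*})\subseteq\sM(\vf^{*})\cup\{h\}$. The paper nominally cites Proposition~\ref{pro:CoverageValue} for the welfare step but actually uses exactly this monotonicity, and the support-diversity claim follows from the same set inclusion, which the paper leaves implicit.
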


\begin{figure}[t]
    
        \centering
        \begin{minipage}{0.48\linewidth}
            \centering
            \textbf{Scenario A:} fully differentiated equilibrium\\
            $\pi_{\vtheta_A} = \pi_{\vtheta_B} = 0.5$\\
            \smallskip
            \begin{tabular}{c|cc}
                 & $S_1(\vtheta)$ & $S_2(\vtheta)$ \\ \hline
                $\vtheta_A$ & 0.90 & 0.85 \\
                $\vtheta_B$ & 0.35 & 0.80 
            \end{tabular}\\
            $W=V(\vf^{*})=V(1, 2)=0.85$
        \end{minipage}
        \hfill
        \begin{minipage}{0.48\linewidth}
            \centering
            \textbf{Scenario B:} Add a new model $g_3$
            $\pi_{\vtheta_A} = \pi_{\vtheta_B} = 0.5$\\
            \smallskip
            \begin{tabular}{c|ccc}
                 & $S_1(\vtheta)$ & $S_2(\vtheta)$& $S_3(\vtheta)$  \\ \hline
                $\vtheta_A$ & 0.90 & 0.85 & 0.91\\
                $\vtheta_B$ & 0.35 & 0.80 & 0.77
            \end{tabular}\\
            $W=V(\vf^{*})=V(3, 3)=0.84$
        \end{minipage}
        \caption{An example where enlarging the model pool decreases welfare. In Scenario A, with models $g_1$ and $d_2$ , the equilibrium is fully differentiated with the welfare $W =0.85$. Adding a new model $g_3$ in Scenario B shifts the equilibrium to the homogeneous $(g_3,g_3)$, where welfare decreases to $W =0.84$. It pulls both platforms toward homogenization, thereby sacrificing the welfare of minority types. The calculation details are provided in Section~\ref{subsec:EquilibriumWelfare}. }
        \label{Fig:ex-modelincrease}
    \end{figure}

%% file: Generative_Competition/More.tex
\section{From Hardmax to Softmax User Choice}
\label{sec:softmax}
Our main analysis adopts the hardmax user choice rule in Eq.~\ref{eq:usechoice}, where each user deterministically selects the platform whose model achieves the highest score $S_{f_i}(\theta)$ for their type $\theta$, breaking ties uniformly.

This assumption makes the analysis of strategic interactions more straightforward. In practice, however, users exhibit noisy and heterogeneous behavior rather than perfectly rational best responses. A natural extension is a softmax choice model. Given a profile $\vf=(f_1,\dots,f_N)$ and a user type $\vtheta$, we define
\begin{equation}
\label{eq:usechoice_soft}
    p_i^{\text{soft}}(\vtheta):=
    \frac{e^{S_{f_i}(\vtheta)/ \tau}}
     {\sum_{k=1}^N e^{S_{f_k}(\vtheta)/ \tau}}
\end{equation}

where $\tau \ge 0$ is a temperature  
parameter controlling the level of randomness in user choice. When $\tau \to \infty$, users are nearly indifferent and split across platforms almost uniformly. As $\tau$ decreases, users concentrate more on higher-scoring platforms, and in the limit $\tau \to 0$ the softmax model converges to the hardmax rule.

Indeed, our negative result on the existence of equilibrium (Proposition \ref{pro:NashEq}) extends to the softmax user choice model. The platform game under Eq.~\ref{eq:usechoice_soft} remains a finite normal-form game, and pure Nash equilibria may 
not exist. Proposition~\ref{prop:softmax_no_pne} in Appendix~\ref{subsec:softmax} shows that any instance with no PNE under the hardmax choice model remains without a PNE for all sufficiently small temperatures $\tau$ in the softmax model. We also provide an example in Appendix~\ref{subsec:softmax} demonstrating that, for a fixed $\tau>0$, the softmax model still admits no pure Nash equilibrium.

The utility decomposition (Proposition \ref{pro:UtilityDecomposition}) and the existence of fully
differentiated and homogeneous equilibrium (Lemma~\ref{lem:MuEqCon}) also extend beyond the hardmax model. 
As shown in Appendix~\ref{subsec:softmax},
an analogous decomposition holds under softmax choice once the deviation term is redefined as $\delta^{\text{soft}}_{f_i}(\vf)$ in Eq.~\ref{eq:delta_soft}. With
this modified deviation, Lemma~\ref{lem:MuEqCon} carries over directly: the equilibrium conditions retain the same form and can still be expressed as inequalities involving
$T_{f_i}$ and $\delta^{\text{soft}}_{f_i}(\vf)$. This highlights that
that market segmentation is not determined by average model performance alone; a model with lower average performance may support a differentiated equilibrium if it performs particularly well for certain user types.

Finally, our notion of welfare is largely independent of the choice rule. The only change is in the relation between coverage $V(\vf)$ and platform utilities: under hardmax, $\sum_i U_i(\vf) = V(\vf)$, whereas under softmax $\sum_i U_i^{\mathrm{soft}}(\vf) \le V(\vf)$, typically with strict inequality, which further increases the misalignment between platform incentives and user welfare. Nevertheless, since the definition of coverage and welfare is determined only by the available models and their scores, all comparisons between $V(\vf)$ and $W_{\text{opt}}$, including Lemma~\ref{lem:WelfareUpper} and its Proposition~\ref{pro:EquilibriumWelfare}, remain valid.

%% file: Generative_Competition/5_BestRespond.tex
\section{Designing Competitive Models for Platform Adoption}
\label{sec:BR}

In this section, we shift focus to model providers. Consider a single provider aiming to learn parameters $\phi$ for an entrant model $g_{\phi}$ that will be adopted by rational platforms. The provider seeks to maximize an adoption-weighted quality objective: $$\max~ F(\phi):= \sum_{\vtheta \in \Theta} \pi_{\vtheta} \sigma_{\vtheta} S_{\phi}(\vtheta)$$
where $\sigma_{\vtheta} \in [0,1]$ represents the adoption probability that users of type $\vtheta$  would choose $g_{\phi}$ when it competes against incumbents, and $S_{\phi}(\vtheta)= \E_{x \sim g_{\phi}} [r_\vtheta(x)]$ is the expected quality that user type $\vtheta$ receives from $g_{\phi}$.

To calculate $\sigma_{\vtheta}$, suppose we can estimate the user distribution $\{\pi_{\vtheta}\}_{\vtheta\in\Theta}$ and the score $S_i(\vtheta)$ of incumbent $i \in \sM$. Let $\bar {S}(\vtheta):= \max_{j\in\sM} S_j(\vtheta)$ be the best opponent score for user type $\vtheta$. A hard adoption rule would set $\sigma_{\vtheta} =1$ when $S_{\phi}(\vtheta) >\bar {S}(\vtheta)$ and $0$ otherwise. As this is non-differentiable and unsuitable for gradient-based optimization, we adopt a Bradley-Terry \citet{BradleyTerry} soft gate on the margin $\Delta_{\vtheta}:= S_{\phi}(\vtheta)- \bar{S}(\vtheta)$ and define the adoption probability of type $\vtheta$ attracted by the new model as $\sigma_{\vtheta}=\sigma\left(\beta\Delta_{\vtheta}\right)$, where $\sigma(z)=\frac{1}{1+e^{-z}}$. Here, $\beta$ controls the softness, as $\beta \rightarrow 0$, $\sigma_{\vtheta}$ approaches the hard adoption.

We provide two solutions to solving the above optimization: 1) training data resampling; and 2) direct-gradient optimization.

\paragraph{Training Data Resampling.}
We first adopt a resampling-based scheme 
that biases the training data distribution toward user types with higher payoff weights $\alpha_{\vtheta}:=\pi_{\vtheta} \left(\sigma_{\vtheta}\right)^{\gamma}\cdot\bar{S}(\vtheta)$, where $\gamma \geq 0$ emphasizes user types for which the entrant is more likely to outperform incumbents. Each data point $x$ is then assigned a sampling probability $\hat w(x)$, normalized from $w(x)\propto\sum_{\vtheta \in \Theta}\alpha_{\vtheta}v_{\vtheta}(x)$, where $v_{\vtheta}(x)\in[0,1]$ measures how strongly $x$ is preferred by users of type $\vtheta$. Specifically:  
\begin{itemize}[leftmargin=*,topsep=0cm]
    \item \textbf{Structured data:} Each $x$ has an attribute (e.g., class, domain, style) $u(x)\in\sU$, and type $\vtheta$ specifies a distribution $q_{\vtheta}(u)$. We set $v_{\vtheta}(x)=q_{\vtheta}(u(x))$, i.e., the probability that $x$ matches type $\vtheta$’s preferred attribute. Sampling then proceeds by first drawing $u \sim \hat{w}(u)\propto\sum_{\vtheta}\alpha_{\vtheta} q_{\vtheta}(u)$, and then sampling $x \sim D(\cdot \mid u)$.

    \item \textbf{Unstructured data:} We use the reward itself $v_{\vtheta}=\mathrm{normalize}(r_{\vtheta}(x))$, so data points yielding higher expected rewards for type $\vtheta$ are sampled more frequently.
\end{itemize}

In this method, type weights are computed based on $\sigma_{\vtheta}$ and $\bar{S}(\vtheta)$, and the model $g_{\phi}$ is trained on data resampled according to these weights. This method alters the data distribution but not the loss function, making it compatible with standard training pipelines while effectively biasing training toward strategically valuable user types. The detailed procedure is provided in Algorithm \ref{alg:resampling}.

\paragraph{Direct-Gradient Optimization.}
We train the model to directly improve both its generation quality and its competitive attractiveness against fixed opponents. Specifically, the training objective is:
\begin{equation}
    \arg\min_{\phi} L(\phi):=\mathcal{L}(\phi)- \lambda F(\phi)  
\end{equation}
where $\mathcal{L}(\phi)$ is the standard loss ensuring that the model maintains overall sample quality, and $F(\phi)$ is the adoption-weighted quality objective that promotes competitiveness. The trade-off parameter $\lambda \geq 0$ balances quality and competitiveness.
The main challenge in optimizing this objective via gradient descent lies in computing the gradient of $F(\phi)$. Note that the only term depending on $\phi$ in $F(\phi) =\sum_{\vtheta \in \Theta} \pi_{\vtheta} \sigma_{\vtheta}S_{\phi}(\vtheta)$ is $S_{\phi}(\vtheta)=\E_{x \sim g_{\phi}} [ r_\vtheta(x) ]$. By the chain rule, we have:
$$
\nabla_\phi F(\phi)=
\sum_{\vtheta \in \Theta} \pi_{\vtheta} \left[\sigma_{\vtheta} + \beta\sigma_{\vtheta}\left(1-\sigma_{\vtheta}\right)S_{\phi}(\vtheta)\right]\cdot \nabla_\phi S_{\phi}(\vtheta)
$$
We next present two estimators for $\nabla_\phi S_{\phi}(\vtheta)$. The detailed procedure is shown in Algorithm~\ref{alg:direct-path}. 
\begin{itemize}[leftmargin=*,topsep=0cm]
    \item \textbf{Pathwise gradient:} This estimator applies when both the reward function $r_\vtheta(\rx)$ (e.g., classifier score, probability output) and the generative model (e.g., GAN~\citep{GAN}, DDPM~\citep{DDPM}, SGM~\citep{SGM}) are differentiable. The model samples $x = g_{\phi}(\xi)$, where $\xi \sim p_{0}(\xi)$ is drawn from a fixed prior $p_{0}$ and $g_{\phi}$ is a deterministic transformation of the noise $\xi$, then
    $$
    \nabla_\phi S_{\phi}(\vtheta) = \E_{\xi}\left[\nabla_\rx r_\vtheta(\rx)\Big|_{\rx=g_\phi(\xi)} \cdot J_\phi g_\phi(\xi)\right]
    $$
    where $J_\phi g_\phi(\xi)$  is the Jacobian of $g_\phi(\xi)$ with respect to $\phi$.
    \item \textbf{REINFORCE gradient} \citep{REINFORCE}\textbf{:} This estimator applies when either the reward function $r_\vtheta(\rx)$ (e.g., discrete 0/1 feedback) or the generative process (e.g., SeqGAN~\citep{SeqGAN}, MaliGAN~\citep{MaliGAN})is non-differentiable. With a moving-average baseline $b_{\vtheta}$ to reduce gradient variance and let $p_{\phi}(\rx)$ is the model distribution, then
    $$
    \nabla_\phi S_{\phi}(\vtheta) = \E_{\rx \sim p_{\phi}}\left[\left(r_\vtheta(\rx)-b_{\vtheta} \right) \nabla_\phi \log(p_{\phi}(\rx)) \right]
    $$
\end{itemize}

%% file: Generative_Competition/6_Experiment.tex
\section{Experiments}
\label{sec:Ex}
In this section, we conduct experiments on both synthetic (Section~\ref{sec:Ex:Sy}) and real-world data to provide a reproducible prototype for validating the theory \footnote{Implementation available at: \url{https://github.com/osu-srml/Generative_Competition}}.

\paragraph{Model Pool.} We adopt a denoising diffusion probabilistic model (DDPM) \citep{DDPM} trained on the full CIFAR-10 dataset \citep{cifar10}, contains 60,000 images from 10 classes $\sC=\{\text{airplne}:=0, \text{automobile}:=1, \text{bird}:=2, \text{cat}:=3, \text{deer}:=4, \text{dog}:=5, \text{frog}:=6, \text{horse}:=7, \text{ship}:=8, \text{truck}:=9 \}$, as the base model. To construct preference-oriented variants, we apply Low-Rank Adaptation (LoRA) \citep{LoRA} fine-tuning with different class-specific subsets. The choice of class groups and LoRA hyperparameters for each variant is summarized in Table~\ref{tab:modelpool}. Each variant captures preferences aligned with a subset of CIFAR-10 classes.

\begin{table*}[t]
\centering
\small
\begin{minipage}{0.45\textwidth}
\centering
\caption{Model Pool}
\label{tab:modelpool}
\begin{tabular}{lcccc}
\toprule
\textbf{\#} & \textbf{classes} & \textbf{$d$} & \textbf{$\alpha_{\ell}$} & \textbf{$\eta_{\ell}$} \\
\midrule
M1&airplane, auto  & 4 & 16 & 1.0\\
M2&ship, truck           & 4 & 16 & 1.0\\
M3&bird, cat             & 4 & 16 & 1.0\\
M4&cat, dog              & 8 & 32 & 1.5\\
M5&cat, dog              & 4 & 16 & 1.0\\
\bottomrule

\end{tabular}
\parbox{0.9\textwidth}{\footnotesize 
Notes: $d$ denotes the LoRA rank, 
$\alpha_\ell$ is the LoRA scaling factor, and $\eta_\ell$ is the external scale applied during fine-tuning.
}
\end{minipage}
\hfill
\begin{minipage}{0.54\textwidth}
\centering
\caption{User Groups}
\label{tab:usergroups}
\begin{tabular}{l c c}
\toprule
\textbf{\#}($\vtheta$) & \textbf{preferences}(class($\theta_{c}$)) & \textbf{$\pi(\vtheta)$} \\
\midrule
A   & cat (0.6), dog (0.4) & 0.18 \\
B & dog (0.7) cat (0.3) & 0.17 \\
C  & airplane (0.5), ship (0.3), auto (0.2) &0.16 \\
D   & auto (0.6), truck (0.4) & 0.16 \\
\multirow{2}{*}{E}     & \multicolumn{1}{c}{bird (0.4), deer (0.3),} & \multirow{2}{*}{0.17} \\
    & \multicolumn{1}{c}{frog (0.2), horse (0.1) } &\\
\multirow{2}{*}{F} & \multicolumn{1}{c}{cat (0.2), dog (0.2), airplane (0.15),} &  \multirow{2}{*}{0.16} \\
  & \multicolumn{1}{c}{auto (0.15), ship (0.1), truck (0.2)} &  \\
\bottomrule
\end{tabular}
\end{minipage}
\end{table*}

\paragraph{User Group.} We partition the user population into six groups, each characterized by heterogeneous preferences over CIFAR-10 classes. Formally, for user group $\vtheta$, we specify a distribution of weights $\vtheta=\{\theta_{c}\}_{c \in \sC}$, $\sum_{c \in \sC}\theta_{c}=1$, the details are given in Table~\ref{tab:usergroups}.

\paragraph{Reward Function.} We employ pretrained ResNet20 model \citep{ResNet} with the $92.60\%$ Top-$1$ accuary trained on CIFAR-10 and held fixed during experiments, assume $p_{\text{acc}}(c \mid \rx)$ is the posterior class probability computed by this model for class $c$. Then the reward of $\rv$ for user type $\vtheta$ is calculated by $r_{\vtheta}(x)=\sum_{c \in \sC}\theta_{c} \cdot p_{\text{acc}}(c \mid \rx) $. For every calculation of $S$, we sample $2000$ samples.

\paragraph{Discrete Best-Response Simulation.} The average performance of the five models $T_i$ and their user-specific performance $S_i(\vtheta)$ are shown in Fig.~\ref{Figapp:FullTS} in Section~\ref{subsec:Ex:Dis}, where we conduct a discrete best-response simulation by progressively enlarging the model pool (from $1$ to $5$ models with $3$ players) and increasing the number of platforms (from $1$ to $6$ players with $5$ models). At each round, platforms update their strategies by choosing the best response among the available models, given the current distribution of opponents' choices. For each game, we perform three independent runs and track diversity $D_\mathrm{HHI}$ and coverage value $V(\vf)$ at every step $t$ of best-response dynamics.

\begin{figure}[thbp]
  \centering
  \begin{subfigure}{0.32\linewidth}
    \centering
    \includegraphics[width=\linewidth]{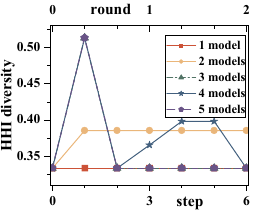}
    \caption{$D_\mathrm{HHI}$ as model pool increases.}
    \label{fig:sweep-model-D}
  \end{subfigure}
  \begin{subfigure}{0.32\linewidth}
    \centering
    \includegraphics[width=\linewidth]{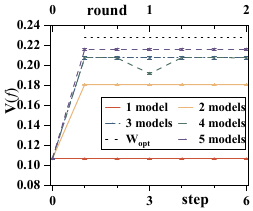}
    \caption{$V(\vf)$ as model pool increases.}
    \label{fig:sweep-model-W}
  \end{subfigure}
  \begin{subfigure}{0.32\linewidth}
    \centering
    \includegraphics[width=\linewidth]{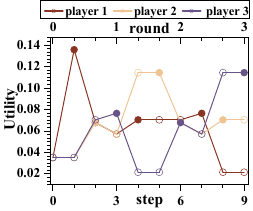}
    \caption{BR with $M=4$,$N=3$}
    \label{fig:sweep-ex-3}
  \end{subfigure}

    \begin{subfigure}{0.32\linewidth}
    \centering
    \includegraphics[width=\linewidth]{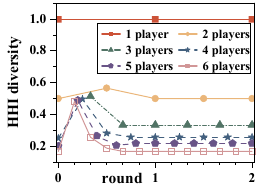}
    \caption{$D_\mathrm{HHI}$ as player increases.}
    \label{fig:sweep-player-D}
  \end{subfigure}
  \begin{subfigure}{0.32\linewidth}
    \centering
    \includegraphics[width=\linewidth]{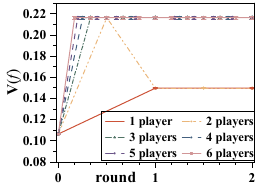}
    \caption{$V(\vf)$ as player increases.}
    \label{fig:sweep-player-W}
  \end{subfigure}
  \begin{subfigure}{0.32\linewidth}
    \centering
    \includegraphics[width=\linewidth]{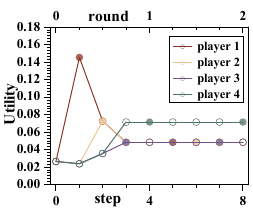}
    \caption{BR with $M=5$,$N=4$}
    \label{fig:sweep-ex-4}
  \end{subfigure}

  \caption{When enlarging either the model pool (a,b) with $3$ players or the number of platforms (d,e) with $5$ models, the change of HHI diversity ((a,d), where larger values indicate more homogenization) and coverage value ((b,e), where larger values are better). (c,f) provide examples of utility trajectories across best-response steps, where filled markers denote the player taking the action at each step. (c) shows best-response cycle and (d) shows an equilibrium.}
  \label{fig:sweep}
\end{figure}
 
 From the Fig.~\ref{fig:sweep-model-D}, enlarging the model pool does not automatically increase diversity. Only when the newly introduced models are sufficiently distinct (add $\text{M2}$ or $\text{M3}$) can they enhance diversity. By contrast, strong entrants that are merely close substitutes for existing leaders tend to homogenize the market (add $\text{M5}$), as platforms converge toward the same high-performing options. This reproduces the convergence phenomenon widely observed in today's generative model markets. Increasing the number of platforms in Fig.~\ref{fig:sweep-player-D}, however, expands adoption opportunities and thus promotes diversity. In terms of welfare, adding more platforms as Fig.~\ref{fig:sweep-player-W} (which enables greater choice) improves user welfare and accelerates its growth. However, welfare never reaches the social optimum. The trajectory examples in Fig.~\ref{fig:sweep-ex-3} and Fig.~\ref{fig:sweep-ex-4} further reveal that early movers often select the ``best" model, but are later forced to share its benefits with subsequent players, leaving their utilities suboptimal. In contrast, players who move later sometimes adopt models that are less attractive globally but provide relative advantages when not shared, resulting in higher individual utilities.

 In Section~\ref{sec:Ex:Sy}, we systematically vary the model pool size, the number of platform players, and the user group distributions on synthetic data, with detailed results and figures provided.

\paragraph{Algorithmic Best-Response Entry.} The hyperparameters and full algorithmic details are provided in the Section.~\ref{subsubsec:algdetails}. Section~\ref{subsubsec:algsentivity} reports a systematic hyperparameter tuning for both methods. Both algorithms are initialized on the full CIFAR-10 dataset. The result is shown in Fig.~\ref{fig:method}.
\begin{figure}[htbp]
  \centering
    \includegraphics[width=\linewidth]{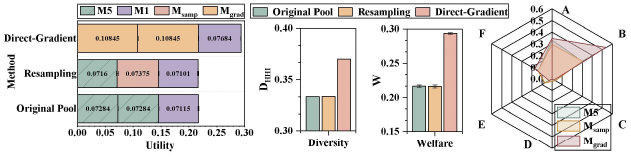}
  \caption{Performance of new models generated by the direct-gradient method $M_{\mathrm{grad}}$ and the resampling method $M_{\mathrm{samp}}$. (a) Left: equilibrium outcomes and utilities of three players when competing over the original pool, after introducing $M_{\mathrm{grad}}$ and $M_{\mathrm{samp}}$. (b) Middle: diversity $D_{\mathrm{HHI}}$ and welfare $W_{\mathrm{eq}}$ under the three equilibrium. (c) Right: comparison between the best original model $M_5$ and the new entrants $M_{\mathrm{grad}}$, $M_{\mathrm{samp}}$ on user scores.}
  \label{fig:method}
\end{figure}

The direct-gradient method achieves stronger performance: it successfully replaces the best model in the original pool, dominates the market, and yields higher welfare and diversity. Moreover, it converges with fewer iterations ($\approx 20$). However, it requires modifying the model’s internal objective, and the diversity of sample classes is decreased as shown in Fig.~\ref{FigApp:method_class}.

The resampling method suffers from higher variance due to stochasticity. It only approaches the best model in the original pool, while reducing welfare. It is also more computationally demanding ($\approx$ $10$ resampling with $50$ iterations each). However, the method has the practical advantage of being plug-and-play: it can be applied to any model without altering its loss function.

%% file: Generative_Competition/Appendix.tex
\appendix

\newpage
\input{Generative_Competition/Appendix_RelatedWork}

\input{Generative_Competition/Appendix_Dis}
\input{Generative_Competition/Appendix_Proof}
\input{Generative_Competition/Appendix_SyEx}

\input{Generative_Competition/Appendix_RealEx}
\input{Generative_Competition/Appendix_AddEx}

\section*{The Use of Large Language Models}
We used a large language model to aid in polishing grammar and phrasing. Consistent with ICLR policy, authors remain fully responsible for all content, including parts assisted by an LLM.

%% file: Generative_Competition/Appendix_RelatedWork.tex
\section{Related Work}
Recent advances in machine learning has led to the emergence of markets in which multiple models simultaneously operate and compete for user adoption. In such competitive environments, the strategic interactions among providers critically shape both market outcomes and overall social welfare.

\textbf{Competition in Predictive Models.} Competition among classification models has been extensively studied through the market design and strategic learning. A key insight is that in competitive markets, maximizing classification accuracy alone does not guarantee higher adoption or improved social welfare. For example, recent work \citet{marketAcc} formalizes the accuracy market, where multiple classification providers compete for users, showing that optimal strategies must account for rivals' actions rather than accuracy in isolation. Similarly, \citet{ReponseRegression} study the learnability of optimal responses in competitive regression, and further establish that pure-strategy equilibria exist and that competition may induce strategic misprediction \citep{RegEq}. Further, \citet{bayesRisk} demonstrate that even when individual predictors achieve lower Bayes risk, strategic competition can paradoxically reduce overall social welfare. Extending beyond classification and regression, \citet{TopkBad} analyze how top-K recommendation performs under competing content creators, showing that user welfare losses remain bounded, while \citet{Competingcreator} propose platform interventions that directly optimize user welfare in such competitive recommendation environments. Overall, these results show that accuracy must be assessed in the context of competition, entry, and social welfare.

\textbf{Competition in Generative Models.} As model capabilities improve, research has increasingly focused on competition among generative models. \citet{competitionAI} show that equilibrium under generative AI competition tends toward content homogeneity, even when models perform well in isolation, while stronger competition can counteract this effect. Empirical studies further suggest that generative AI usage in areas such as peer review \citep{GenAIPeerReview1,GenAIPeerReview2}, writing \citep{Genwriting}, and creative generation \citep{genMonoculture} often associated with reduced output diversity. Beyond direct competition between models, recent work also investigates interactions between generative AI and human participants. At the creator level, \citet{HumanvsAI} model competition between human creators and generative AI using a generalized contest framework, showing conditions for coexistence, conflict, or even the absence of stable equilibria; while at the platform level, \citet{braessParadox} demonstrate that generative AI can paradoxically reduce overall welfare in human-driven platforms, echoing Braess’s paradox.

\textbf{Strategic Model Behavior.} Another line of work studies how learning systems are shaped by strategic environments and feedback-driven dynamics. For example, \citet{Adversarially} show that when models learn from feedback, they can infer latent human preferences from the resulting data and manipulate the feedback signals, inducing undesirable shifts in downstream output distributions. Related concerns also arise in performative prediction \citep{perdomo2020performative,jin2024addressing} and strategic learning \citep{LCF,xie2024non, xie2024automating,jin2022incentive,zhang2022fairness}, where model deployment endogenously alters the data-generating process, inducing distributional drift and welfare loss.

Unlike prior studies that focus on two-layer market, our work formalizes a three-layer model-platform-user game. Under the assumption of deterministic user choice, we show that pure Nash equilibria may not exist. Building on this observation, we characterize the conditions under which equilibria arise and analyze how the resulting market structures shape welfare and diversity as the pool of available models expands. Moreover, we depart from prior work by adopting the perspective of model providers and proposing best-response entry training schemes that allow entrants to strategically introduce new models. This provider-centric view and the resulting entry dynamics are largely absent from existing studies of model competition.

%% file: Generative_Competition/Appendix_Dis.tex
\section{Discussion}
\paragraph{Platforms with Multiple Models.} For tractability, the current framework assumes that each platform selects a single model. However, it can be extended naturally to accommodate multiple models per platform.
In this setting, each platform's strategy would be a set of models $M_{i}$, and user choice could depend on the highest-performing model in that set for their type, $\hat{S}_i(\theta)= \max_{j \in M_{i}}S_j(\theta)$, or an expected score $\hat{S}_i(\theta)= \mathbb{E}_{j \in M_{i}}S_j(\theta)$. The three-layer market formalization remains as before, with platform payoffs computed using $\hat{S}_i(\theta)$ instead of $S_i(\theta)$, and best responses now taken over model mixtures rather than single models. Then, many of the existing analysis, such as the utility decomposition, equilibrium characterization, and welfare analysis, can be generalized to this setting.

\paragraph{Partially Overlapping Markets.}
Our framework is designed for settings where platforms offer comparable services and draw from a shared pool of models $M$. In such markets, platforms face similar types of demand (e.g., overlapping mixes of coding and translation tasks), and their strategic decision is which model from this common pool to deploy in order to attract groups. By contrast, if different platforms specialize in largely disjoint services (e.g., one focuses almost exclusively on coding while another focuses almost exclusively on translation), then their effective model pools may overlap only partially or not at all. In that case, they are not competing for the same user base in the sense of our model: a user might naturally use both services for different tasks, and platforms no longer face a shared competitive environment. Extending the framework to this multi-market or partially overlapping-market environments is an interesting direction for future work.

\paragraph{Dynamic Interactions.} Our analysis focuses on first-round interactions in a static, complete-information setting. In practice, user behavior and platform strategies evolve gradually: the data distribution shifts under repeated deployments, and models are retrained iteratively. Incorporating these multi-round feedback effects into our three-layer framework is left for future work.

%% file: Generative_Competition/Appendix_Proof.tex
\section{Proofs}
\label{sec:proof}

\subsection{Proof of Proposition ~\ref{pro:NashEq}}
\label{prosec:NashEq}
\NashEq*
\begin{proof}
We provide a constructive counterexample here.
Let $\Theta=\{\vtheta_A,\vtheta_B,\vtheta_c\}$ with uniform weights $\pi(\vtheta_k)=\frac{1}{3}$. 
Let $\sG=\{g_1,g_2,g_3\}$ and define scores:
$$
\begin{array}{c|ccc}
 & S_1(\vtheta) & S_2(\vtheta) & S_3(\vtheta)\\ \hline
\vtheta_A & 0.2 & 0.1& 0\\
\vtheta_B & 0 & 0.2 & 0.1\\
\vtheta_C & 0.1 & 0 & 0.2
\end{array}
$$
Then when there are two players, the payoff matrix is :
$$
\begin{array}{c|ccc}
\vf=(f_1,f_2) & g_1 & g_2 & g_3\\ \hline
g_1 & (0.05,0.05) & (0.1,0.067)& (0.067,0.1)\\
g_2 & (0.067,0.1) & (0.05,0.05) & (0.1,0.067)\\
g_3 & (0.1,0.067) & (0.067,0.1) & (0.05,0.05)
\end{array}
$$

On the diagonal $(g_k,g_k)$ each platform gets $0.05$.
Against $g_k$, the unique best response is the model that yields $0.10$, so any diagonal profile is profitably deviated from. Off the diagonal, the player receiving $0.067$ can switch to the third model and improve to $0.10$.
Hence no profile is a mutual best response; therefore no PNE exists.
\end{proof}

\subsection{Proofs of Proposition ~\ref{pro:UtilityDecomposition}}
\label{prosec:UtilityDecomposition}

To illustrate the intuition, we first consider the case with two platforms $N=2$.

The strategy profile is 
$\vf = (f_i,f_j)$, in this case, the attraction term in Definition~\ref{def:AttractionDeviation} simplifies to:
\begin{equation}
    Z_{ij}(\vtheta) :=
        \begin{cases}
        S_{f_i}(\vtheta) & \text{if } S_{f_i}(\vtheta) > S_{f_j}(\vtheta) \\
        0 & \text{if tie} \\
        - S_{f_i}(\vtheta) & \text{if } S_{f_i}(\vtheta) < S_{f_j}(\vtheta)
        \end{cases}
    \end{equation}
which measures how much user type 
$\vtheta$ strictly prefers $f_i$ over $f_j$. Accordingly, the deviation advantage is: 
\begin{equation}
    \delta_{ij}:= \sum_{\vtheta \in \Theta} \pi(\vtheta) \cdot Z_{ij}(\vtheta)
\end{equation}

\begin{proposition}[Utility Decomposition for two platforms.]
\label{pro:UtilityDec2}
    Suppose $N=2$, let $i,j \in \sI$ be the two players who choose model $f_i$ and $f_j$ under strategy $\vf$. 
    Then the expected utility is
    \begin{equation}
        U_i\left(f_i,f_j\right) =
        \begin{cases}
        \frac{1}{2} T_{f_i} & \text{if } f_i = f_j  \\
        \frac{1}{2}\left(T_{f_i} + \delta_{ij}\right) & \text{if } f_i \neq f_j 
        \end{cases}
    \end{equation}
\end{proposition}

\begin{proof}
Consider the case with two platforms ($N=2$) and strategy profile $\vf = (f_i,f_j)$, let $Win_{i}=\{\vtheta: S_{f_i}(\vtheta)>S_{f_j}(\vtheta)\}$, $Tie=\{\vtheta:S_{f_i}(\vtheta)=S_{f_j}(\vtheta)\}$, $Loser_{i}=\{\vtheta: S_{f_i}(\vtheta)<S_{f_j}(\vtheta)\}$.

With Eq.~\ref{eq:usechoice}, a type $\vtheta$ is fully assigned to the winner, split evenly on a tie, and assigned zero to the loser.  
Therefore, the utility of the platform choosing $g_i$ is:
$$U_{i}(f_i,\vf_{-i})=\sum_{\vtheta \in \Theta} \pi_\vtheta \cdot p_i(\vtheta) \cdot S_{f_i}(\vtheta) =\sum_{\vtheta\in Win_i}\pi(\vtheta)S_{f_i}(\vtheta) + \frac{1}{2}\sum_{\vtheta\in Tie}\pi(\vtheta) S_{f_i}(\vtheta)
$$

By definition,
$$T_i = \sum_{\vtheta\in\Theta}\pi(\vtheta) S_{f_i}(\vtheta)= \sum_{\vtheta\in Win_i}\pi(\vtheta) S_{f_i}(\vtheta) + \sum_{\vtheta\in Tie}\pi(\vtheta) S_{f_i}(\vtheta) + \sum_{\vtheta\in Loser_i}\pi(\vtheta) S_{f_i}(\vtheta)$$
$$\delta_{ij} = \sum_{\vtheta\in\Theta}\pi(\vtheta) Z_{ij}(\vtheta)= \sum_{\vtheta\in Win_i}\pi(\vtheta) S_{f_i}(\vtheta) - \sum_{\vtheta\in Loser_i}\pi(\vtheta) S_{f_i}(\vtheta)
$$

Adding these two:
$$
T_{f_i}+\delta_{ij} =2\sum_{\vtheta\in Win_i}\pi(\vtheta) S_{f_i}(\vtheta) + \sum_{\vtheta\in Tie}\pi(\vtheta) S_{f_i}(\vtheta)
$$

Then
$$
\frac{1}{2}\left (T_{f_i}+\delta_{ij}\right )
= \sum_{Win_i}\pi(\vtheta) S_{f_i}(\vtheta) + \frac{1}{2}\sum_{T}\pi(\vtheta) S_{f_i}(\vtheta)
= u_i
$$

If both platforms choose the same model, then all users are split evenly, so
$$
u_i = u_j = \frac{1}{2}\sum_{\vtheta}\pi(\vtheta) S_{f_i}(\vtheta)
= \frac{1}{2}T_{f_i}.
$$

Therefore, for $N=2$:
$$
\boxed{
U_i\left(f_i,f_j\right) =
    \begin{cases}
        \frac{1}{2} T_{f_i} & \text{if } f_i = f_j  \\
        \frac{1}{2}\left(T_{f_i} + \delta_{ij}\right) & \text{if } f_i \neq f_j 
    \end{cases}
}
$$
\end{proof}

\UtilityDecomposition*
\begin{proof}
Fix a strategy profile $\vf=(f_1,\dots,f_N)$ and a model $f_i$ chosen by player $i$. Recall that let $\sM(\vf)=\{f_1,\dots,f_N\}$ denote the set of models used in this strategy. For a user type $\vtheta$, define the set of maximizers $\sA_\vf(\vtheta):=\arg\max_{k \in \sM(\vf)} S_k(\vtheta)$ and let $A_\vf(\vtheta):= |\sA_\vf(\vtheta)|$ be the number of models tied for the maximum.

Under the rule, the share of type $\vtheta$ that is allocated to a platform using $f_i$ is
$$
p_{i}(\vtheta;\vf):=
\begin{cases}
    \frac{1}{A_\vf(\vtheta)} & f_i\in \sA_\vf(\vtheta)\\
    0 & f_i\notin\sA_\vf(\vtheta).
\end{cases}
$$
Hence the expected utility of player $i$ equals
$$
U_{i}(f_i,\vf_{-i})
=\sum_{\vtheta\in\Theta}\pi(\vtheta)p_i(\vtheta;\vf)\,S_{f_i}(\vtheta).
$$

We now prove the following per-type identity:
\begin{equation}
\label{eq:per-type-id}
    N \cdot p_j(\vtheta;\vf) \cdot S_{f_j}(\vtheta)=S_{f_j}(\vtheta)+Z_j(\vtheta;\vf)\quad \forall \vtheta\in\Theta
\end{equation}
where $Z_j(\vtheta;\vf)$ is defined in Definition~\ref{def:AttractionDeviation}.

\textbf{Case 1:} $f_j\notin \sA_\vf(\vtheta)$. Then $p_j(\vtheta;\vf)=0$, so the left-hand side of Eq.~\ref{eq:per-type-id} is $0$.
By Definition~\ref{def:AttractionDeviation}, $Z_j(\vtheta;\vf)=-S_{f_j}(\vtheta)$, hence
$S_{f_j}(\vtheta)+Z_j(\vtheta;\vf)=S_{f_j}(\vtheta)-S_{f_j}(\vtheta)=0$.
Thus Eq.~\ref{eq:per-type-id} holds.

\textbf{Case 2:} $j\in \sA_\vf(\vtheta)$).
Then $p_j(\vtheta;\vf)=\frac{1}{A_\vf(\vtheta)}$.
Again by Definition~\ref{def:AttractionDeviation}
$$
Z_j(\vtheta;\vf)=\frac{N-A_\vf(\vtheta)}{A_\vf(\vtheta)}\,S_{f_j}(\vtheta)
$$
Therefore
$$
S_{f_j}(\vtheta)+Z_j(\vtheta;\vf)
=\left(1+\frac{N-A_\vf(\vtheta)}{A_\vf(\vtheta)}\right)S_{f_j}(\vtheta)
=\frac{N}{A_\vf(\vtheta)}\,S_{f_j}(\vtheta)
= N\,p_j(\vtheta;\vf)\,S_{f_j}(\vtheta)
$$
So Eq.~\ref{eq:per-type-id} also holds.

When we have Eq.~\ref{eq:per-type-id}, sum both sides over $\vtheta$ with weights $\pi(\vtheta)$ and divide by $N$:
$$
\sum_{\vtheta \in \Theta}\pi(\vtheta)p_j(\vtheta;\vf)S_{f_j}(\vtheta)
=\frac{1}{N}\sum_{\vtheta\in \Theta}\pi(\vtheta)\left(S_{f_j}(\vtheta)+Z_j(\vtheta;\vf)\right)
=\frac{1}{N}\left(T_{f_j}+\delta_{f_j}(\vf)\right)
$$
where
$\delta_j(\vf):=\sum_{\vtheta}\pi(\vtheta)Z_j(\vtheta;\vf)$.

Since $U_{i}(f_i,\vf_{-i}) = \sum_{\vtheta \in \Theta}\pi(\vtheta)p_i(\vtheta;\vf)S_{f_i}(\vtheta)$, we obtain:

$$
\boxed{
U_{i}(f_i,\vf_{-i}) =\frac{1}{N}\left(T_{f_i}+\delta_{f_i}(\vf)\right)
}
$$
\end{proof}

\subsection{Proofs of Lemma~\ref{lem:MuEqCon}}
\label{prosec:MuEqCon}

We first consider the case with $N=2$.
\begin{lemma}[Conditions for Equilibrium for two platforms]
\label{lem:TwoMuEqCon}
    Consider a game with $N=2$ platform players choosing between $M$ models $\sG$ with a finite users' type space $\Theta$ with weights $\pi(\vtheta)\ge 0$, $\sum_{\vtheta\in\Theta}\pi(\vtheta) =1 $. The utility of each player is defined in Eq.~\ref{eq:U}. A strategy $\vf^{*} = (f^{*}_1 =g_i,f^{*}_2=g_j)$ with $i\ne j$ is a \textbf{fully differentiated equilibrium} iff
        \begin{equation}
            \begin{cases}
                T_i +\delta_{ij} \geq \max \{ T_j,\max_{k \neq j} \{T_k +\delta_{kj}\} \} \\
                T_j +\delta_{ji} \geq \max \{ T_i,\max_{k \neq i} \{T_k +\delta_{ki}\} \}
            \end{cases}
        \end{equation}
    When $M=2$, the condition becomes 
    \begin{equation}
        -\delta_{ij} \leq T_i-T_j \leq \delta_{ji}
    \end{equation}
    
    A strategy $\vf^{*} = (f^{*}_1,f^{*}_2)$ is a \textbf{homogeneous equilibrium} where all $f^{*}_i=m$ for some $m\in\sM$ iff
    \begin{equation}
        \exists m \in M \ \text{s.t.}\quad 
T_m -T_k \ \ge \delta_{km} \quad \forall k   \in \sM\setminus \{m\}
    \end{equation}
    When $M=2$, the condition becomes 
    \begin{equation}
    T_j-T_i >\delta_{ij} \quad \text{or} \quad T_i-T_j > \delta_{ji}
    \end{equation}
\end{lemma}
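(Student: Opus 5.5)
The plan is to reduce every equilibrium condition to a comparison of utilities via the two-platform decomposition, Proposition~\ref{pro:UtilityDec2}. With $N=2$, a profile $(g_i,g_j)$ is a pure Nash equilibrium exactly when neither player gains by unilaterally switching. Player~1's possible deviations against a fixed opponent model $g_j$ split into two kinds:
\begin{itemize}
\item switching to $g_j$ itself, which yields $\tfrac12 T_j$;
\item switching to some $g_k$ with $k\neq j$, which yields $\tfrac12(T_k+\delta_{kj})$.
\end{itemize}
Multiplying by $2$, player~1 has no profitable deviation iff $T_i+\delta_{ij}\ge \max\{T_j,\max_{k\neq j}(T_k+\delta_{kj})\}$. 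Here the case $k=i$ is trivially included, so taking the max over all $k\neq j$ is harmless. The symmetric argument for player~2 against $g_i$ gives the second inequality, which establishes the fully differentiated characterization.

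For the $M=2$ specialization, the only deviation available to player~1 facing $g_j$ is to $g_j$ itself, since $k\neq j$ forces $k=i$. The first condition therefore reads $T_i+\delta_{ij}\ge T_j$, i.e.\ $-\delta_{ij}\le T_i-T_j$. Likewise, the second condition reads $T_j+\delta_{ji}\ge T_i$, i.e.\ $T_i-T_j\le\delta_{ji}$. Together these give the stated two-sided bound.

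For homogeneous profiles $(g_m,g_m)$, the game is symmetric, so it suffices to check one player. Staying yields $\tfrac12 T_m$ by the tie case of Proposition~\ref{pro:UtilityDec2}, and deviating to any $k\neq m$ yields $\tfrac12(T_k+\delta_{km})$. The profile is therefore an equilibrium iff $T_m\ge T_k+\delta_{km}$ for all $k\neq m$, which is the stated condition. A homogeneous equilibrium exists iff some $m$ satisfies it.

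For $M=2$, the homogeneous condition becomes $T_m-T_k\ge\delta_{km}$ for $m\in\{i,j\}$. Taking $m=j$ gives $T_j-T_i\ge\delta_{ij}$, and $m=i$ gives $T_i-T_j\ge\delta_{ji}$. The statement uses strict inequalities, whereas the weak inequalities come out naturally, so the obstacle is reconciling the two. I would first check whether the strict form is intended as the complement of the differentiated condition. When $T_j-T_i>\delta_{ij}$, the differentiated condition fails for player~1, so strictness is exactly the negation. I would then argue the two $M=2$ regimes partition the parameter space up to boundary ties, at which both kinds of equilibrium may coexist. If exactness at the boundary is required, I would state it with $\ge$ and remark that the strict version is the regime in which a homogeneous equilibrium exists and no differentiated one does.
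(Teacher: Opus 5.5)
Your proposal is correct and takes the same route as the paper: read each player's payoff from Proposition~\ref{pro:UtilityDec2}, compare it against every unilateral deviation, and obtain the general conditions and their $M=2$ specializations directly. Your handling of the $M=2$ homogeneous case is more careful than the paper's, which just writes strict inequalities. As you note, the exact characterization is the weak form $T_j-T_i\ge\delta_{ij}$ or $T_i-T_j\ge\delta_{ji}$; for instance, two models with identical scores give $T_j-T_i=\delta_{ij}=0$, yet $(g_i,g_i)$ is an equilibrium. The strict form is precisely the negation of the differentiated condition.
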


\begin{proof}
First, let's consider that there is only two models, $N=2$ and $M=2$.
Using Proposition~\ref{pro:UtilityDec2}, we obtain the utility of each model, from which the payoff matrix can be derived.
$$
\begin{array}{c|c|c}
\vf & g_i & g_j \\
\hline
g_i & \left( \frac{1}{2} T_i, \frac{1}{2} T_i \right) & \left( \frac{1}{2}(T_i + \delta_{ij}), \frac{1}{2}(T_j + \delta_{ji}) \right) \\
g_j & \left( \frac{1}{2}(T_j + \delta_{ji}), \frac{1}{2}(T_i + \delta_{ij}) \right) & \left( \frac{1}{2} T_j, \frac{1}{2} T_j \right)
\end{array}
$$

Suppose players choose different models. This is a pure strategy Nash equilibrium if and only if neither player wants to deviate, that is:
$$
\begin{cases}
    \frac{1}{2}(T_i + \delta_{ij}) \ge \frac{1}{2} T_j \\
    \frac{1}{2}(T_j + \delta_{ji}) \ge \frac{1}{2} T_i
    \end{cases}
    \quad \Longleftrightarrow \quad
    \begin{cases}
    T_i + \delta_{ij} \ge T_j \\
    T_j + \delta_{ji} \ge T_i
\end{cases}
$$

Then the condition is:
$$
\boxed{
- \delta_{ij} \leq T_i-T_j \leq \delta_{ij}
}
$$

Suppose players choose the same model. This is a pure strategy Nash equilibrium if and only if neither player wants to deviate:
$$
\frac{1}{2}(T_j + \delta_{ji}) < \frac{1}{2} T_i
\quad \Longleftrightarrow \quad
T_j + \delta_{ji} < T_i
$$
or
$$
\frac{1}{2}(T_i + \delta_{ij}) < \frac{1}{2} T_j
\quad \Longleftrightarrow \quad
T_i + \delta_{ij} < T_j
$$

Then the condition is:
$$
\boxed{
T_j-T_i >\delta_{ij} \quad \text{or} \quad T_i-T_j > \delta_{ji}
}
$$

Second, let's consider that there is more than two models.
The payoff matrix is:
$$
\begin{array}{c|c|c|c|c}
\vf & g_i & g_j & \cdots & g_k \\
\hline
g_i & \left( \frac{1}{2} T_i, \frac{1}{2} T_i \right) & \left( \frac{1}{2}(T_i + \delta_{ij}), \frac{1}{2}(T_j + \delta_{ji}) \right) & \cdots & \left( \frac{1}{2}(T_i + \delta_{ik}), \frac{1}{2}(T_k + \delta_{ki}) \right) \\
g_j & \left( \frac{1}{2}(T_j + \delta_{ji}), \frac{1}{2}(T_i + \delta_{ij}) \right) & \left( \frac{1}{2} T_j, \frac{1}{2} T_j \right) & \cdots & \left( \frac{1}{2}(T_j + \delta_{jk}), \frac{1}{2}(T_k + \delta_{kj}) \right) \\
\cdots & \cdots & \cdots & \cdots & \cdots \\
g_k& \left( \frac{1}{2}(T_k + \delta_{ki}), \frac{1}{2}(T_i + \delta_{ij}) \right) & \left( \frac{1}{2}(T_k + \delta_{kj}), \frac{1}{2}(T_j + \delta_{jk}) \right) & \cdots & \left( \frac{1}{2} T_k, \frac{1}{2} T_k \right) \\
\end{array}
$$

Suppose players choose different models $i, j$. This is a pure strategy Nash equilibrium if and only if neither player wants to deviate, that is:
$$
\begin{cases}
\frac{1}{2}(T_i + \delta_{ij}) \ge \frac{1}{2} T_j \\
\frac{1}{2}(T_i + \delta_{ij}) \ge \frac{1}{2}(T_k + \delta_{kj}) \quad \forall k \in M\\
\frac{1}{2}(T_j + \delta_{ji}) \ge \frac{1}{2} T_i\\
\frac{1}{2}(T_j + \delta_{ji}) \ge \frac{1}{2}(T_k + \delta_{ik}) \quad \forall k \in M
\end{cases}
\quad \Longleftrightarrow \quad
\begin{cases}
T_i + \delta_{ij} \ge T_j \\
T_j + \delta_{ji} \ge T_i
\end{cases}
$$
Then the condition is:
$$
\boxed{
\exists i \neq j \in M \ \text{s.t.}
            \begin{cases}
                T_i +\delta_{ij} \geq \max \{ T_j,\max_{k \neq j} \{T_k +\delta_{kj}\} \} \\
                T_j +\delta_{ji} \geq \max \{ T_i,\max_{k \neq i} \{T_k +\delta_{ki}\} \}
            \end{cases}
}
$$

Suppose players choose different models $m$. This is a pure strategy Nash equilibrium if and only if neither player wants to deviate:
$$
\frac{1}{2}(T_j + \delta_{ji}) < \frac{1}{2} T_i
\quad \Longleftrightarrow \quad
T_j + \delta_{ji} < T_i \quad \forall i \neq j
$$
Then the condition is:
$$
\boxed{
\exists m \in M \ \text{s.t.}\quad 
T_m -T_k \ \ge \delta_{km} \quad \forall k \in M\setminus\{m\}
}
$$
\end{proof}

\MuEqCon*
\begin{proof}
Fix a candidate profile $\vf^{*}=(f^{*}_1,\dots,f^{*}_N)$.
For any player $i$ and any deviation $g\in\sG\setminus\{f^{*}_i\}$, let
$\vf'=(f^{*}_1,\dots,f^{*}_{i-1},g,f^{*}_{i+1},\dots,f^{*}_N)=\vf^{*}_{-i}\cup\{g\}$.

By definition of a pure Nash equilibrium, $\vf^{*}$ is a PNE iff for all $i$ and all such $g$,
$$
U_i(\vf^{*})\ \ge\ U_i(\vf')
$$
Using Proposition~\ref{pro:UtilityDecomposition}, we have
$$
N \cdot U_i(\vf^{*}) = T_{f^{*}_i}+\delta_{f^{*}_i}(\vf^{*})
\qquad
N \cdot U_i(\vf') = T_{g}+\delta_{g}(\vf')
$$
Therefore:
$$
T_{f^{*}_i}+\delta_{f^{*}_i}(\vf^{*}) \ge T_{g}+\delta_{g}(\vf')
$$
Conversely, if it holds for all $i$ and all $g\ne f^{*}_i$, then the above inequality reverses to $U_i(\vf^{*})\ge U_i(\vf')$ for every deviation, so no player profits from deviating and $\vf^{*}$ is a PNE. This completes the proof of the fully differentiated case.

The homogeneous case is similar, with $f^{*}_i=m$ for all $i$; plugging $m$ into the same inequality we obtain the desired results.
\end{proof}

\subsection{Calculation of the Example in Fig.~\ref{Fig:ex1}}
\label{sec:calex1}

\medskip
\noindent\textbf{Scenario A:} 
Two user types $\Theta = \{\vtheta_A, \vtheta_B\}$ with equal weights $\pi(\vtheta_A) = \pi(\vtheta_B) = 0.5$.
The model scores are:
$$
\begin{array}{c|cc}
 & S_1(\vtheta) & S_2(\vtheta) \\ \hline
\vtheta_A & 0.90 & 0.85 \\
\vtheta_B & 0.35 & 0.80
\end{array}
$$

The average scores are:
$$
T_1 = 0.625, \quad T_2 = 0.825, \quad T_2 - T_1 = 0.20
$$
The deviation advantages are:
$$
\delta_{12} = \tfrac12( +0.90 - 0.35) = 0.275, 
\quad \delta_{21} = \tfrac12( -0.85 + 0.80) = -0.025
$$
The differentiation condition 
$-\delta_{12} \le T_1 - T_2 \le \delta_{21}$ becomes:
$$
-0.275 \le -0.20 \le -0.025
$$
which holds. Hence, by Lemma~\ref{lem:TwoMuEqCon}, the equilibrium is \textbf{full differentiated}: the two platforms select different models, even though $T_1 < T_2$.

The payoff matric of this scenario is:
$$
\begin{array}{c|cc}
\vf & g_1 & g_2 \\ \hline
g_1 & (0.3125, 0.3125) & (0.45, 0.4) \\
g_2 & (0.4, 0.45) & (0.4125, 0.4125)
\end{array}
$$
So the equilibrium is $(g_1, g_2)$ or $(g_2, g_1)$.

\textbf{Scenario B:} 
We keep $T_1 = 0.625$, $T_2 = 0.825$, and $T_2 - T_1 = 0.20$, but change the type-level structure to weaken $g_1$'s advantage:
$$
\begin{array}{c|cc}
 & S_1(\vtheta) & S_2(\vtheta) \\ \hline
\vtheta_A & 0.60 & 0.70 \\
\vtheta_B & 0.65 & 0.95
\end{array}
$$
The deviation advantages are now:
$$
\delta_{12} = \tfrac12( -0.60 - 0.65) = -0.625, 
\quad \delta_{21} = \tfrac12( +0.70 + 0.95) = 0.825
$$
The differentiation condition $-\delta_{12} \le T_1 - T_2 \le \delta_{21}$ becomes:
$$
0.625 \le -0.20 \le 0.825
$$
which fails. The consolidation condition 
$T_2 - T_1 > \delta_{12}$ or $T_1 - T_2 > \delta_{21}$ holds since $0.20 > -0.625$; thus, the equilibrium is \textbf{homogeneous} on $g_2$.

The payoff matric of this scenario is:
$$
\begin{array}{c|cc}
\vf & g_1 & g_2 \\ \hline
g_1 & (0.3125, 0.3125) & (0, 0.825) \\
g_2 & (0.825, 0) & (0.4125, 0.4125)
\end{array}
$$
So the equilibrium is $(g_2, g_2)$.

\subsection{The proof of Corollary~\ref{col:Centralizationhomogeneous}}
\label{subsec:Centralizationhomogeneous}
\Centralizationhomogeneous*
\begin{proof}
We use the utility decomposition Proposition~\ref{pro:UtilityDecomposition}
$$
N\cdot U_i(\vf) = T_{f_i} + \delta_{f_i}(\vf)
$$

Suppose all players currently choose $m$, consider a deviation by a single platform to some $k\neq m$. Let $\Delta$ denote the utility gain from this deviation, then
$$
\Delta = [T_k+\delta_k(\vf^*_{-i}\cup\{k\})] - [T_m+\delta_m(\vf^*)]
$$
The $\Delta$ is consists of three parts:

\textbf{Loss on the dominant type $\vtheta^\star$:}
Under $\vf^*$, each platform receives a $\frac{1}{N}$ share of $\vtheta^\star$’s contribution $\frac{1}{N}S_m(\vtheta^\star)$.
After deviating to $k$, the deviator’s share on $\vtheta^\star$ becomes $0$ because $m$ strictly wins there.
Using the margin $S_m(\vtheta^\star)-S_k(\vtheta^\star)\ge \rho$, the utility loss from $\vtheta^\star$ is at least
$$
\Delta U_{\vtheta^\star} \ge \frac{\pi_{\vtheta}^{\star} \rho}{N}
$$

\textbf{Gain on minority types where $k$ wins:}
On $\Theta\setminus\{\vtheta^\star\}$, the total mass is $1-\pi_{\vtheta}^{\star}$. Wherever $k$ wins $m$, the deviator’s share improves from $\frac{1}{N}$ to $1$. Since $k$'s per-type advantage over $m$ is at most $\Gamma$, the upper bound gain is
$$
\Delta U_{\text{win}} \le \frac{(1-\pi_{\vtheta}^{\star}) \Gamma}{N}
$$

\textbf{Additional loss on minority types where $k$ loses:}
On those types where $m$ remains superior, the deviator’s share falls from $\frac{1}{N}$ to $0$. Bounding score levels by the same heterogeneity constant $\Gamma$, we get
$$
\Delta U_{\text{lose}} \le -\frac{(1-\pi_{\vtheta}^{\star}) \Gamma}{N}
$$

So the change:
\begin{align}
    \Delta &= \Delta U_{\text{win}} - \Delta U_{\text{lose}} -\Delta U_{\vtheta^\star} \notag \\
    &= \frac{(1-\pi_{\vtheta}^{\star})\Gamma}{N} +\frac{(1-\pi_{\vtheta}^{\star})\Gamma}{N} -\frac{\pi_{\vtheta}^{\star}\rho}{N} \notag \\
    &= \frac{2(1-\pi_{\vtheta}^{\star})\Gamma - \pi_{\vtheta}^{\star}\rho}{N} \notag 
\end{align}

Therefore, if $\Delta \le 0$, $ 2(1-\pi_{\vtheta}^{\star})\Gamma - \pi_{\vtheta}^{\star}\rho \le 0$, that is:
$$
\pi_{\vtheta}^{\star} \ge 1-\frac{\rho}{\rho+2\Gamma}
$$
so no player benefits from deviating and the homogeneous profile $\vf^*$ is a Nash equilibrium.
So the condition is:
$$
\boxed{
\pi_{\vtheta}^{\star} \ge 1-\frac{\rho}{\rho+2\Gamma}
}
$$
\end{proof}

\subsection{Proof of Proposition ~\ref{pro:CoverageValue}}
\label{subsec:CoverageValue}
\begin{proposition}
\label{pro:cov2plyer}
Consider the case where $N=2$ and fix a strategy $\vf=(g_i,g_j)$. For each user type $\vtheta\in\Theta$ with weight $\pi(\vtheta)\ge 0$, the coverage value of the pair $(i,j)$ is
$$
V(i,j) =\frac{1}{2}\left(T_i + T_j + \delta_{ij} + \delta_{ji}\right)
$$
\end{proposition}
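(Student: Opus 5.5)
We will derive the identity from Proposition~\ref{pro:UtilityDec2} by showing that the coverage value is exactly the sum of the two platforms' utilities.

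\emph{Step 1 (per-type check of the sum of shares).} Fix a type $\vtheta$ and compare $p_1(\vtheta)S_{g_i}(\vtheta)+p_2(\vtheta)S_{g_j}(\vtheta)$ with $\max\{S_{g_i}(\vtheta),S_{g_j}(\vtheta)\}$.
\begin{itemize}[leftmargin=*,topsep=0cm]
\item If $S_{g_i}(\vtheta)>S_{g_j}(\vtheta)$, the whole type goes to platform $1$, so the sum is $S_{g_i}(\vtheta)$, which is the maximum.
\item The case $S_{g_j}(\vtheta)>S_{g_i}(\vtheta)$ is symmetric.
\item On a tie, each platform receives half the type. The sum is $\tfrac12 S_{g_i}(\vtheta)+\tfrac12 S_{g_j}(\vtheta)$, which again equals the common maximum.
\end{itemize}
Weighting by $\pi_\vtheta$ and summing over $\Theta$ gives
\[
V(i,j)=U_1(g_i,g_j)+U_2(g_j,g_i).
\]

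\emph{Step 2 (distinct models).} When $i\neq j$, Proposition~\ref{pro:UtilityDec2} gives $U_1=\tfrac12(T_i+\delta_{ij})$ and $U_2=\tfrac12(T_j+\delta_{ji})$. Adding these yields the claimed formula $V(i,j)=\tfrac12(T_i+T_j+\delta_{ij}+\delta_{ji})$.

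\emph{Step 3 (same model).} When $i=j$, every type is tied, so $Z_{ii}(\vtheta)=0$ for all $\vtheta$ and hence $\delta_{ii}=0$. The formula then reads $\tfrac12(2T_i)=T_i$. This matches $V(i,i)=\sum_\vtheta\pi_\vtheta S_i(\vtheta)=T_i$, and also matches $U_1+U_2=\tfrac12T_i+\tfrac12T_i$.

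\emph{Cross-check.} One can also verify the formula directly. Split $\Theta$ into $Win_i$, $Tie$ and $Loser_i$. Then
\begin{itemize}[leftmargin=*,topsep=0cm]
\item on $Win_i$, the bracket $S_{g_i}+S_{g_j}+S_{g_i}-S_{g_j}$ equals $2S_{g_i}$;
\item on $Tie$, the $\delta$ terms vanish and $S_{g_i}+S_{g_j}=2S_{g_i}$;
\item on $Loser_i$, the bracket equals $2S_{g_j}$.
\end{itemize}
In every case, half the bracket is the per-type maximum.

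The only delicate point is the tie case: we must confirm that the convention $Z=0$ on ties, together with the halved shares, still reproduces the maximum. Step 1 handles this explicitly.
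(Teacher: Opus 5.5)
Your proof is correct, but your main argument takes a different route from the paper's. The paper never uses the platform utilities. It first proves $\delta_{ij}+\delta_{ji}=\sum_{\vtheta}\pi_\vtheta|S_i(\vtheta)-S_j(\vtheta)|$ by a three-case check (strict win, strict loss, tie). It then sums the pointwise identity $\max\{a,b\}=\tfrac12(a+b+|a-b|)$ over types. Your ``cross-check'' is essentially this argument with the absolute value removed.

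Your main line instead establishes the conservation identity $V(i,j)=U_1+U_2$: under hardmax, the shares of each type exactly reallocate its maximal score. You then add two instances of Proposition~\ref{pro:UtilityDec2}.

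What your route buys is structural. It isolates the fact that platform utilities partition the coverage value, which is the point the paper emphasizes after Proposition~\ref{pro:CoverageValue}. It also extends almost verbatim to $N$ platforms via Proposition~\ref{pro:UtilityDecomposition}, closely mirroring the paper's own $N$-platform proof. You also treat $i=j$ explicitly, whereas the paper states its $\delta$ identity only for $i\neq j$.

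The paper's route is self-contained, since it does not depend on the utility decomposition. It also yields the by-product $V(i,j)=\tfrac12(T_i+T_j)+\tfrac12\sum_{\vtheta}\pi_\vtheta|S_i(\vtheta)-S_j(\vtheta)|$, which makes the welfare gain from differentiation explicit. However, the $|a-b|$ trick has no direct analogue beyond two platforms.
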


\begin{proof}
First, for any $i\neq j$,
\begin{equation}
\label{eq:deltamax}
    \delta_{ij} + \delta_{ji} \ =\sum_{\vtheta\in\Theta} \pi_\vtheta |S_i(\vtheta) - S_j(\vtheta)|
\end{equation}

Fix a type $\vtheta$. Consider three cases.

\textbf{Case 1:} $S_i(\vtheta) > S_j(\vtheta)$:
Then $Z_{ij}(\vtheta)=S_i(\vtheta)$ and $Z_{ji}(\vtheta)=-S_j(\vtheta)$, so
$Z_{ij}(\vtheta)+Z_{ji}(\vtheta)=S_i(\vtheta)-S_j(\vtheta)=|S_i(\vtheta)-S_j(\vtheta)|$.

\textbf{Case 2:} $S_i(\vtheta) < S_j(\vtheta)$:
Then $Z_{ij}(\vtheta)=-S_i(\vtheta)$ and $Z_{ji}(\vtheta)=S_j(\vtheta)$, so
$Z_{ij}(\vtheta)+Z_{ji}(\vtheta)=S_j(\vtheta)-S_i(\vtheta)=|S_i(\vtheta)-S_j(\vtheta)|$.

\textbf{Case 3: $S_i(\vtheta) = S_j(\vtheta)$:}
Then $Z_{ij}(\vtheta)=Z_{ji}(\vtheta)=0$, hence the sum is $0=|S_i(\vtheta)-S_j(\vtheta)|$.

Multiplying by $\pi_\vtheta$ and summing over $\vtheta$ yields the claim.

Use the pointwise identity:
$
\max\{a,b\}=\frac{1}{2}\left(a+b+|a-b|\right)
$.
Applying it with $a=S_i(\vtheta)$ and $b=S_j(\vtheta)$ and summing over $\vtheta$:

$$
\begin{aligned}
V(i,j)
&= \sum_{\vtheta} \pi(\vtheta) \max\{S_i(\vtheta),S_j(\vtheta)\} \\
&= \frac{1}{2}\sum_{\vtheta} \pi(\vtheta)\left(S_i(\vtheta)+S_j(\vtheta)+|S_i(\vtheta)-S_j(\vtheta)|\right) \\
&= \frac{1}{2}\Bigg(\underbrace{\sum_{\vtheta} \pi(\vtheta)S_i(\vtheta)}_{T_i}
+ \underbrace{\sum_{\vtheta} \pi(\vtheta) S_j(\vtheta)}_{T_j}
+ \underbrace{\sum_{\vtheta} \pi(\vtheta) |S_i(\vtheta)-S_j(\vtheta)|}_{\delta_{ij}+\delta_{ji} \text{by Eq.~\ref{eq:deltamax}}}\Bigg) \\
&= \frac{1}{2} \left( T_i + T_j + \delta_{ij} + \delta_{ji} \right)
\end{aligned}
$$
\end{proof}

\CoverageValue*
\begin{proof}
Fix $\vtheta$ and recall $\sA_\vf(\vtheta):=\arg\max_{k\in\sM(\vf)}S_k(\vtheta)$ and $A_\vf(\vtheta)=|\sA_\vf(\vtheta)|$.
By the definition of $Z_j(\vtheta;\vf)$:
$$
\begin{aligned}
\sum_{j\in\sM(\vf)}\left(S_j(\vtheta)+Z_j(\vtheta;\vf)\right)
&=\sum_{j\in\sA_\vf(\vtheta)}\left(1+\frac{N-A_\vf(\vtheta)}{A_\vf(\vtheta)}\right)S_j(\vtheta) \\
&=\frac{N}{A_\vf(\vtheta)}\sum_{j\in\sA_\vf(\vtheta)} S_j(\vtheta)\\
&=N\max_{k\in\sM(\vf)}S_k(\vtheta)
\end{aligned}
$$
Multiply by $\pi(\vtheta)$, sum over $\vtheta$, and divide by $N$ to obtain
$$
\begin{aligned}
V(\vf)&=\frac{1}{N}\sum_{j\in\sM(\vf)}\left(\sum_{\theta}\pi(\vtheta) S_j(\vtheta)+\sum_{\theta}\pi(\vtheta) Z_j(\vtheta;\vf)\right)\\
&=\frac{1}{N}\sum_{j\in\sM(\vf)}\left(T_j+\delta_j(\vf)\right)\\
&=\frac{1}{N}\sum_{i=1}^N\left(T_{f_i}+\delta_{f_i}(\vf)\right)
\end{aligned}
$$
\end{proof}

\subsection{Proof of Lemma ~\ref{lem:WelfareUpper}}
\label{subsec:WelfareUpper}
\WelfareUpper*
\begin{proof}
If $\mathcal O=\vf^{*}$ is a PNE, then $W(\mathcal O)=V(\vf^{*})\le \max_{\vf}V(\vf)=W_{\mathrm{opt}}$ by definition.

If $\mathcal O$ is a cycle $\vf^{(1)},\dots,\vf^{(L)}$, then
$$
W(\mathcal O)=\frac{1}{L}\sum_{t=1}^L V(\vf^{(t)})
\le \frac{1}{L}\sum_{t=1}^L \max_{\vf} V(\vf) = W_{\mathrm{opt}}
$$
since an arithmetic mean is at most its maximum term.

Therefore, $\boxed{W(\mathcal O) \ge W_{\mathrm{opt}}}$
\end{proof}

\subsection{The Example of Lemma~\ref{lem:WelfareUpper}}
\label{sec:ex2}
\begin{example}
Consider three user types $\vtheta_A$,$\vtheta_B$,$\vtheta_C$ with weights $\pi(\vtheta_A) = 0.5$, $\pi(\vtheta_B) = 0.3$, $\pi(\vtheta_C) = 0.2$.
Their scores for each of the three models $g_1,g_2,g_3$ are:
$$
\begin{array}{c|ccc}
 & g_1 & g_2 & g_3 \\ \hline
\vtheta_A  & 0.434 & 0.698 & 0.760 \\
\vtheta_B  & 0.828 & 0.679 & 0.431 \\
\vtheta_C  & 0.343 & 0.776 & 0.565 \\
\end{array}
$$
The average scores are:
$$
T_1 = 0.534,\quad T_2 = 0.7079,\quad T_3 = 0.6223.
$$

The pairwise attraction shifts $\delta_{ij}$ are computed as in the model:
$$
\delta_{12} = -0.0372,\quad \delta_{21} = 0.3005, \quad \delta_{13} = -0.0372
$$
$$
\delta_{31} = 0.3637,\quad 
\delta_{23} = 0.0099,\quad \delta_{32} = 0.1377
$$

The coverage value of a pair $(g_i,g_j)$ is
$$
V(i,j) = \sum_{\theta} \pi_\theta \max\{S_\theta(g_i),\,S_\theta(g_j)\}
$$
Numerically:
$$
V(1,2) = 0.7526,\quad V(1,3) = 0.7414,\quad V(2,3) = 0.7389
$$
Thus, the socially optimal pair is $(g_1,g_2)$ with
$$
W_{\mathrm{opt}} = 0.7526.
$$

The payoff:
$$
\begin{array}{c|ccc}
 \vf & g_1 & g_2 & g_3 \\ \hline
g_1 & (0.267,0.267)& (0.2484, 0.5042) & (0.2484, 0.5112) \\
g_2 & (0.5042, 0.2484) & (0.35395, 0.35395) & (0.3589, 0.38)\\
g_3 & (0.5112, 0.2484) & (0.38, 0.3589) & (0.31115,0.31115)\\
\end{array}
$$
Equilibrium check for $(g_2,g_3)$: the differentiation condition requires:
$$
\begin{cases}
    T_2 + \delta_{23} \ge \max\{T_3,\ T_1+\delta_{13}\} \\
    T_3 + \delta_{32} \ge \max\{T_2,\ T_1+\delta_{12}\} 
\end{cases}
$$
Substituting:
$$
\begin{aligned}
    T_2+\delta_{23} &= 0.7079+0.0099 = 0.7178\\
    \max\{T_3,\,T_1+\delta_{13}\} &= \max\{0.6223,\,0.534-0.0372\} = 0.6223\\
    T_3+\delta_{32} &= 0.6223+0.1377 = 0.7600\\
    \max\{T_2,\,T_1+\delta_{12}\} &= \max\{0.7079,\,0.534-0.0372\} = 0.7079
\end{aligned}
$$
Both inequalities hold, hence $(g_2,g_3)$ is a full differentiated equilibrium with welfare
$$
W_{\mathrm{eq}} = V(2,3) = 0.7389.
$$

Although $(g_2,g_3)$ is a valid differentiated equilibrium, it yields lower welfare than the optimal pair $(g_1,g_2)$:
$$
\boxed{W_{\mathrm{eq}} = 0.7389 \ < \ W_{\mathrm{opt}} = 0.7526}
$$
This demonstrates that a differentiated equilibrium does not necessarily coincide with socially optimal differentiation.
\end{example}

\subsection{Proof of Proposition~\ref{pro:EquilibriumWelfare}}
\label{subsec:EquilibriumWelfare}

\EquilibriumWelfare*
\begin{proof}
Best-response conditions (i) and (ii) imply $\hat\vf$ is a PNE. If $\hat\vf$ belongs to a cycle, appending $h$ yields a one-step extension that meets the same no-improvement conditions for that period, so the induced outcome is an equilibrium.

For welfare, by the Proposition~\ref{pro:CoverageValue}, since $\sM(\hat\vf)=\sM(\vf^{*})\cup\{h\}$, for every type $\vtheta$ we have
$$
\max_{k\in\sM(\hat\vf)}S_k(\vtheta) \ge\max_{k\in\sM(\vf^{*})}S_k(\vtheta)
$$
Summing with weights $\pi(\vtheta)$: $V(\hat\vf)\ge V(\vf^{*})$.

If $h\notin\sM(\vf^{*})$ and improves some type strictly, then the inequality is strict.
\end{proof}

\begin{example}[counterexample: two $\rightharpoonup$ three models]
Two user types $\Theta = \{\vtheta_A, \vtheta_B\}$ with equal weights $\pi(\vtheta_A) = \pi(\vtheta_B) = 0.5$.

\noindent\textbf{Scenario A:} 
The model scores are:
$$
\begin{array}{c|cc}
 & S_1(\vtheta) & S_2(\vtheta) \\ \hline
\vtheta_A & 0.90 & 0.85 \\
\vtheta_B & 0.35 & 0.80
\end{array}
$$

The average scores are:
$$
T_1 = 0.625, \quad T_2 = 0.825
$$
The deviation advantages are:
$$
\delta_{12} = \tfrac12( +0.90 - 0.35) = 0.275, 
\quad \delta_{21} = \tfrac12( -0.85 + 0.80) = -0.025
$$

The payoff matric of this scenario is:
$$
\begin{array}{c|cc}
\vf & g_1 & g_2 \\ \hline
g_1 & (0.3125, 0.3125) & (0.45, 0.4) \\
g_2 & (0.4, 0.45) & (0.4125, 0.4125)
\end{array}
$$
So the equilibrium is $(g_1, g_2)$ or $(g_2, g_1)$, and $W(\mathcal{O}) = V(1,2) = 0.85$

\textbf{Scenario B:} 
Add a new model $g_3$ with $S_3(\vtheta_A)=0.91$,$S_3(\vtheta_B)= 0.77$
Then $T_3= 0.84$
The deviation advantages are now:
$$
\delta_{12} = \tfrac12( +0.90 - 0.35) = 0.275, 
\quad \delta_{21} = \tfrac12( -0.85 + 0.80) = -0.025
\quad \delta_{13} = \tfrac12( -0.90 - 0.35) = -0.625
$$
$$
\delta_{31} = \tfrac12( +0.91 +0.77) = 1.68,
\quad \delta_{23} = \tfrac12( -0.85 + 0.80) = -0.025,
\quad \delta_{32} = \tfrac12( +0.91 -0.77) = 0.07
$$

The payoff matric of this scenario is:
$$
\begin{array}{c|ccc}
\vf & g_1 & g_2 & g_3 \\ \hline
g_1 & (0.3125, 0.3125) & (0.45, 0.4) & (0,0.84)\\
g_2 & (0.45, 0.4) & (0.4125, 0.4125) & (0.4,0.455)\\
g_3 & (0.84, 0) & (0.455, 0.4) & (0.42,0.42)
\end{array}
$$
So the equilibrium is $(g_3, g_3)$, and $W(\mathcal{O}) = V(3,3) = 0.84$.

Here, $0.84 < 0.85$
\end{example}

\begin{example}[counterexample: two $\rightharpoonup$ three players]
Consider user types $\pi(\vtheta_A)=0.18$, $\pi(\vtheta_B) = 0.17$, $\pi(\vtheta_C) = 0.16$, $ \pi(\vtheta_D) = 0.16$, $\pi(\vtheta_E) = 0.17$, $\pi(\vtheta_F) = 0.16$

The model scores are:
$$
\begin{array}{c|cccccc}
 & S_1(\vtheta) & S_2(\vtheta) & S_3(\vtheta) & S_4(\vtheta) & S_5(\vtheta) & S_6(\vtheta) \\ \hline
\vtheta_A &0.030658748	&0.208093837	&\textbf{0.32744655}	&0.298774868	&0.154842913	&0.020151094 \\
\vtheta_B  &0.021978186	&0.149636775	&\textbf{0.298145086}	&0.274754494	&0.092761844	&0.014372437\\
\vtheta_C  &\textbf{0.266589463}	&0.035725005	&0.019578686	&0.029395873	&0.04788997	&0.182804301\\
\vtheta_D  &\textbf{0.171553999}	&0.007992042	&0.007932614	&0.019235272	&0.067757338	&0.160182327\\
\vtheta_E &0.039888468	&\textbf{0.145473659}	&0.077957489	&0.078738138	&0.110034101	&0.019024562\\
\vtheta_F &0.131100401	&0.089481771	&0.136355415	&0.132456332	&0.095638528	&\textbf{0.136379898}
\end{array}
$$
\textbf{Scenario A:} 
With only two platform:
the equilibrium is $(g_3, g_6)$ with user welfare $W \approx 0.2148$

\textbf{Scenario B:} 
With three platforms: the cycle is $(g_3,g_3,g_1) \rightarrow (g_3,g_3,g_6) \rightarrow (g_1,g_3,g_6) \rightarrow (g_1,g_3,g_3)$ and $W= \left(V(g_3,g_3,g_1)+V(g_3,g_3,g_6)+V(g_1,g_3,g_6)\right)/3$ $\approx(0.2147+0.2148+0.199571)/3= 0.2097$

Since $0.214 > 0.210$, adding a platform may not increase the user welfare.
\end{example}

\subsection{Extension to Softmax User Choice Model}
\label{subsec:softmax}
\begin{proposition}[Robust nonexistence of PNE under softmax choice]
\label{prop:softmax_no_pne}
Consider a fixed instance $(\Theta,\pi,\{S_j(\vtheta)\}_j)$.
Let $U_i^{\mathrm{hard}}(\vf)$ denote platform utilities under the
hardmax user choice rule Eq.~\ref{eq:usechoice}, and suppose that the
induced platform game admits no pure Nash equilibrium, in the
following strict sense:
there exists $\Delta>0$ such that for every profile $\vf$ there is a
platform $i$ and a deviation $f'_i$ with:
\begin{equation}
  U_i^{\mathrm{hard}}(f'_i,\vf_{-i})\ge U_i^{\mathrm{hard}}(f_i,\vf_{-i}) + \Delta
  \label{eq:strict_no_pne_hard}
\end{equation}

Let $U_i^{\mathrm{soft}}(f;\tau)$ be the utilities under the softmax
user choice rule Eq.~\ref{eq:usechoice_soft} with $\tau>0$. Then there exists $\tau_0>0$ such that for $\forall$ $0<\tau\le\tau_0$, the softmax game $(U_i^{\mathrm{soft}}(\cdot;\tau))$ also admits no pure Nash equilibrium.
\end{proposition}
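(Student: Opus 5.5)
The approach is a uniform convergence argument. Since the game is finite, with finitely many profiles $\vf\in\sM^N$ and finitely many unilateral deviations, it suffices to show that for every profile $\vf$ and every platform $i$,
$$
\lim_{\tau\to 0^+} U_i^{\mathrm{soft}}(\vf;\tau) = U_i^{\mathrm{hard}}(\vf).
$$
Once we have this, pointwise convergence over a finite index set is automatically uniform. So there will exist $\tau_0>0$ such that
$$
\max_{\vf,i}\bigl|U_i^{\mathrm{soft}}(\vf;\tau)-U_i^{\mathrm{hard}}(\vf)\bigr| < \Delta/2 \quad\text{for all } 0<\tau\le\tau_0 .
$$

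The first step is to verify the convergence of the choice probabilities for each fixed type $\vtheta$. Let $s^\star=\max_k S_{f_k}(\vtheta)$ and let $\sA_\vf(\vtheta)$ be the maximizer set of size $A_\vf(\vtheta)$. Dividing the numerator and denominator of Eq.~\ref{eq:usechoice_soft} by $e^{s^\star/\tau}$ gives
$$
p_i^{\mathrm{soft}}(\vtheta)=\frac{e^{(S_{f_i}(\vtheta)-s^\star)/\tau}}{A_\vf(\vtheta)+\sum_{k\notin\sA_\vf(\vtheta)} e^{(S_{f_k}(\vtheta)-s^\star)/\tau}} .
$$
Every exponent with $k\notin\sA_\vf(\vtheta)$ is strictly negative, so those terms vanish as $\tau\to0^+$. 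Hence $p_i^{\mathrm{soft}}(\vtheta)\to 1/A_\vf(\vtheta)$ if $f_i\in\sA_\vf(\vtheta)$ and $p_i^{\mathrm{soft}}(\vtheta)\to 0$ otherwise, which is exactly $p_i(\vtheta)$ from Eq.~\ref{eq:usechoice}, ties included.

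The second step is to pass this to utilities. The softmax utility is defined as
$$
U_i^{\mathrm{soft}}(\vf;\tau)=\sum_{\vtheta}\pi_\vtheta\, p_i^{\mathrm{soft}}(\vtheta)\,S_{f_i}(\vtheta),
$$
which is the same form as Eq.~\ref{eq:U} with $p_i$ replaced by $p_i^{\mathrm{soft}}$. This is a finite linear combination of the probabilities with fixed coefficients, so it converges to $U_i^{\mathrm{hard}}(\vf)$. For an explicit $\tau_0$, I can let $\gamma>0$ be the smallest nonzero gap $|S_j(\vtheta)-S_k(\vtheta)|$ over all types and model pairs, and $S_{\max}=\max_{j,\vtheta}|S_j(\vtheta)|$. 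The error in each probability is then at most $N e^{-\gamma/\tau}$, which gives
$$
|U^{\mathrm{soft}}-U^{\mathrm{hard}}|\le N S_{\max} e^{-\gamma/\tau}.
$$

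The final step is the contradiction. Suppose some $\vf$ were a pure Nash equilibrium of the softmax game for some $\tau\le\tau_0$. By Eq.~\ref{eq:strict_no_pne_hard} there are $i$ and $f_i'$ with a hardmax gain of at least $\Delta$. Combining with the uniform bound,
$$
U_i^{\mathrm{soft}}(f_i',\vf_{-i};\tau) > U_i^{\mathrm{hard}}(f_i',\vf_{-i})-\Delta/2 \ge U_i^{\mathrm{hard}}(\vf)+\Delta/2 > U_i^{\mathrm{soft}}(\vf;\tau).
$$
So platform $i$ has a strictly profitable deviation in the softmax game, which is a contradiction.

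The main obstacle is the tie case: the limit must reproduce the uniform tie-breaking of Eq.~\ref{eq:usechoice} rather than some other split. The normalization in the first step handles this, because tied maximizers contribute exactly $A_\vf(\vtheta)$ to the denominator. The uniformity over profiles could also look like a difficulty, but it is only a matter of finiteness; the $\Delta$-gap hypothesis exists precisely so that this finite uniform approximation suffices.
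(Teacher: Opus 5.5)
Your proposal is correct and follows essentially the same route as the paper: pointwise convergence $p_i^{\mathrm{soft}}(\vtheta)\to p_i(\vtheta)$ as $\tau\to 0^+$, upgraded to uniform convergence of utilities because there are finitely many profiles, and then a margin argument showing that the hardmax deviation with gain at least $\Delta$ stays strictly profitable under softmax (you use a $\Delta/2$ bound, the paper takes $\varepsilon=\Delta/4$). Your normalization by $e^{s^\star/\tau}$, which shows that ties split uniformly over the $A_\vf(\vtheta)$ tied platforms, and your explicit rate $N S_{\max} e^{-\gamma/\tau}$ make precise what the paper only asserts informally ("the largest-score terms dominate the denominator").
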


\begin{proof}
Fix a profile $\vf$ and a type $\vtheta$. Under hardmax, a type $\vtheta$ user only considers platforms whose model achieves the highest score $\max_k S_{f_k}(\vtheta)$, assigns equal probability to those platforms, and assigns zero probability to all others. Under the
softmax rule Eq.~\ref{eq:usechoice_soft}
$$
    p_i^{\text{soft}}(\vtheta):=
    \frac{e^{S_{f_i}(\vtheta)/ \tau}}
     {\sum_{k=1}^N e^{S_{f_k}(\vtheta)/ \tau}}
$$
as $\tau\to 0$, the largest-score terms dominate the denominator, so
$p_i^{\mathrm{soft}}(\vtheta;\tau)
\to p_i^{\mathrm{hard}}(\vtheta)$.

Platform utilities are finite weighted sums of these probabilities:
$$
U_i^{\mathrm{soft}}(\vf) =\sum_{\vtheta} \pi_\vtheta p_i^{\mathrm{soft}}(\vtheta)S_{f_i}(\vtheta)
$$

hence $U_i^{\mathrm{soft}}(\vf)\to U_i^{\mathrm{hard}}(\vf)$ as $\tau\to 0$.

Because the strategy space is finite, this convergence is uniform over
all profiles $f$:
for any $\varepsilon>0$ there exists $\tau_0>0$ such that for all
$0<\tau\le\tau_0$, all platforms $i$, and all profiles $f$,
$$
\bigl| U_i^{\mathrm{soft}}(\vf)
      -U_i^{\mathrm{hard}}(\vf)\bigr| \le \varepsilon
$$

Consider any profile $\vf$. By the strict no NE condition
Eq.~\ref{eq:strict_no_pne_hard}, there exist $i$ and $f'_i$ with
$$
U_i^{\mathrm{hard}}(f'_i;\vf_{-i})-
U_i^{\mathrm{hard}}(f_i;\vf_{-i}) \ge \Delta
$$
Choose $\varepsilon=\Delta/4$ and the corresponding $\tau_0$.
For any $0<\tau\le\tau_0$,
\begin{align*}
 U_i^{\mathrm{soft}}(f'_i;\vf_{-i})-U_i^{\mathrm{soft}}(f_i;\vf_{-i})
 &\ge
 \bigl[U_i^{\mathrm{hard}}(f'_i;\vf_{-i})-\varepsilon\bigr]
 -
 \bigl[U_i^{\mathrm{hard}}(f_i;\vf_{-i})+\varepsilon\bigr]\\
 &\ge \Delta - 2\varepsilon = \Delta/2 > 0
\end{align*}
Thus $\vf$ cannot be a best response strategy in the softmax user choice.
Since $\vf$ was arbitrary, the softmax game has no pure Nash equilibrium
for any $0<\tau\le\tau_0$.
\end{proof}

\begin{example}
We provide a constructive counterexample here.
Let $\Theta=\{\vtheta_A,\vtheta_B\}$ with uniform weights $\pi(\vtheta_k)=0.5$. 
Let $\sG=\{g_1,g_2,g_3\}$ and define scores
$$
\begin{array}{c|ccc}
 & S_1(\vtheta) & S_2(\vtheta) & S_3(\vtheta)\\ \hline
\vtheta_A & 0.734 & 0.148 & 0.934\\
\vtheta_B & 0.833 & 0.935 & 0.534
\end{array}
$$
If the softmax user choice is used with $\tau = 0.1$, then when there are two players, the payoff matrix is:
$$
\begin{array}{c|ccc}
\vf=(f_1,f_2) & g_1 & g_2 & g_3\\ \hline
g_1 & (0.39175,0.39175) & (0.47634,0.34381)& (0.43853,0.42549)\\
g_2 & (0.34381,0.4763) & (0.27075,0.27075) & (0.45843,0.47210)\\
g_3 & (0.42549,0.43853) & (0.47210,0.45843) & (0.36925,0.36925)
\end{array}
$$
Here, the cycle is $(g_3,g_1)$,$(g_3,g_2)$,$(g_1,g_2)$,$(g_1,g_3)$,$(g_2,g_3)$,$(g_2,g_1)$.

The $W_{opt} = 0.9345$, but $W = (0.8835 + 0.9345+0.835)/3=0.87067$
\end{example}

We now show that the $T+\delta$ decomposition extends to the softmax
user choice rule in Eq.~\ref{eq:usechoice_soft}. We keep
Definition~\ref{def:avgScore} for the average score $T_{f}$ unchanged, and adapt the attraction term and deviation advantage as follows:

\begin{definition}[Attraction Term and Deviation Advantage of softmax]
\label{def:AttractionDeviation_softmax}
    For a strategy profile $\vf = (f_1,\dots,f_N)$ with $f_i\in \sM$, the \textit{attraction term} for $f_i$ in strategy $\vf$ is defined as
    \begin{equation}
        Z_{f_i}^{\text{soft}}(\vtheta;\vf) := 
        \frac{(N-1)e^{S_{f_i}(\vtheta)/ \tau}-\sum_{k \neq i} e^{S_{f_k}(\vtheta)/ \tau}}
        {\sum_{k=1}^N e^{S_{f_k}(\vtheta)/ \tau}}
        S_{f_i}(\theta)
    \end{equation}
    The \textit{deviation advantage} for $f_i$ under strategy $\vf$ is defined as
    \begin{equation}\label{eq:delta_soft}
        \delta_{f_i}^{\text{soft}}(\vf) := \sum_{\vtheta \in \Theta} \pi_{\vtheta} \cdot Z_{f_i}^{\text{soft}}(\vtheta;\vf)
    \end{equation}
\end{definition}

With this definition, the utility decomposition in
Proposition~\ref{pro:UtilityDecomposition} continues to hold under
softmax choice:
    \begin{equation}
        U_i^{\text{soft}}\left(f_i;\vf_{-i}\right) = \frac{1}{N}\left( T_{f_i}+\delta^{\text{soft}}_{f_i}(\vf)\right)
    \end{equation}
    
\begin{proof}
We use $e^{\vf}$ to denote $\sum_{f_j\in \vf}e^{S_{f_j}(\vtheta)/ \tau}$.

We have:
\begin{align}
&U_{i}^{\text{soft}}(f_i,\vf_{-i})
=\sum_{\vtheta\in\Theta}\pi(\vtheta)\frac{e^{S_{f_i}(\vtheta)/ \tau}}{e^{\vf}}\,S_{f_i}(\vtheta)\\
&T_{f_i} = \sum_{\vtheta\in\Theta}\pi(\vtheta)\,S_{f_i}(\vtheta)= \sum_{\vtheta\in\Theta}\pi(\vtheta)\frac{e^{S_{f_i}(\vtheta)/ \tau}+e^{\vf_{-i}}}{e^{\vf}}
S_{f_i}(\vtheta)\\
&\delta^{\text{soft}}_{f_i}(\vf) = \sum_{\vtheta \in \Theta} \pi_{\vtheta} \cdot Z_{f_i}^{\text{soft}}(\vtheta;\vf)= \sum_{\vtheta \in \Theta} \pi_{\vtheta} \frac{(N-1)e^{S_{f_i}(\vtheta)/ \tau}-e^{\vf_{-1}}}{e^{\vf}}S_{f_i}(\theta)
\end{align}

It is clear that $U_{i}^{\text{soft}}(f_i,\vf_{-i})=\frac{1}{N}\left( T_{f_i}+\delta_{f_i}^{\text{soft}}(\vf)\right)$
\end{proof}

%% file: Generative_Competition/Appendix_SyEx.tex
\section{Experiments with Synthetic Dataset}
\label{sec:Ex:Sy}

\begin{table}[thbp!]
\small
\centering
\caption{Simulation model pool.}
\label{tab:simumodels}
\begin{tabular}{lcccc}
\toprule
Model &$b_j$ & $\mu_{jr}$ &$A_{jr}$ &$\sigma_{jr}$ \\
\midrule
1    & 0.12 & (1.5, 0.0)              & 0.90               & 1.20 \\
2         & 0.05 & (0.0, 0.0)              & 1.30               & 0.35 \\
3        & 0.08 & (3.0, 0.0)              & 1.00               & 0.50 \\
4        & 0.06 & (0.0,0.0), (3.0,0.0)    & 0.70, 0.70         & 0.70, 0.70 \\
5         & 0.05 & (1.5, 0.6)              & 1.00               & 0.40 \\
6       & 0.05 & (1.5,-0.6)              & 1.00               & 0.40 \\
\bottomrule
\end{tabular}
\end{table}

In this section, we design a controlled simulation environment to study equilibrium outcomes under different models and user populations.
\paragraph{Generative Models.}
We consider $M$ generative models $\sG=\{g_j\}_{j=1}^M$, each parameterized as a 
Radial Basis Function (RBF) \citep{RBF} mixture:
$$
g_j(x) = b_j + \sum_{r=1}^{R_j} A_{jr} \cdot 
  \exp\left( -\tfrac{1}{2\sigma_{jr}^2} \|x - \mu_{jr}\|^2 \right)
$$
where $R_j$ is the number of kernels for model $j$, $\mu_{jr}$ is the center of the $r$-th kernel, $\sigma_{jr}$ is its width, $A_{jr}$ is its amplitude, and $b_j$ is a bias. Outputs are truncated to $[0,1]$.

\paragraph{User Distributions}
We represent the user by $\Theta = \{\vtheta_k\}_{k=1}^K$, where $\vtheta \in \R^{d}$ has distribution $\pi_{\vtheta}$. The distribution $\pi_\vtheta$ is derived by discretizing a Gaussian Mixture Model (GMM) with $Q$ components, where each component $q$ is parameterized by weight $w_q \geq 0$ with $\sum_q w_q = 1$, mean vector $\mu_q$, and covariance matrix $\Sigma_q$:
$$
\pi(u) \;=\; \sum_{q=1}^Q w_q \, \mathcal{N}(u \mid \mu_q(u), \Sigma_q(u)).
$$
continuous samples $u$ drawn from this GMM are then mapped to the nearest discrete type $\vtheta_k$, This construction yields a finite user distribution $\pi(\theta)$ that serves as input to the equilibrium analysis. The variant user groups are constructed by shifting all component means along the $x$-axis:
$\mu_q(u) \mapsto\mu_q(u) + (dx, 0)$
where $dx$ controls the degree of population shift or by adjust different weight $w_q$.

\begin{table}[thbp]
\small
\centering
\caption{User distribution parameters.}
\label{tab:simuusers}
\begin{tabular}{cccc}
\toprule
Weight $w_q$ & Mean $\mu_q$ & Covariance $\Sigma_q$ \\
\midrule
0.6 & (0.0, 0.0) & [[0.25, 0.0], [0.0, 0.25]] \\
0.4 & (3.0, 0.0) & [[0.25, 0.0], [0.0, 0.25]]) \\
\bottomrule
\end{tabular}
\end{table}

\begin{figure}[thbp!]
  \centering
  \begin{subfigure}{0.32\linewidth}
    \centering
    \includegraphics[width=\linewidth]{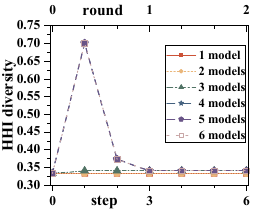}
    \caption{$D_\mathrm{HHI}$ as model pool increases.}
    \label{FigApp:simu-model-D}
  \end{subfigure}
  \begin{subfigure}{0.32\linewidth}
    \centering
    \includegraphics[width=\linewidth]{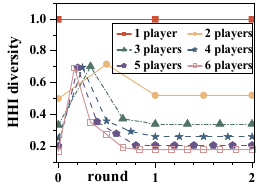}
    \caption{$D_\mathrm{HHI}$ as player increases.}
    \label{FigApp:simu-player-D}
  \end{subfigure}
  \begin{subfigure}{0.32\linewidth}
    \centering
    \includegraphics[width=\linewidth]{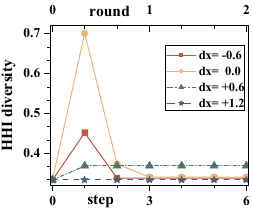}
    \caption{$D_\mathrm{HHI}$ as $dx$ increases.}
    \label{FigApp:simu-dx-D}
  \end{subfigure}

      \begin{subfigure}{0.32\linewidth}
    \centering
    \includegraphics[width=\linewidth]{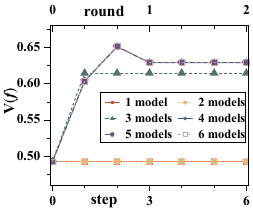}
    \caption{$V({\vf})$ as model pool increases.}
    \label{FigApp:simu-model-W}
  \end{subfigure}
  \begin{subfigure}{0.32\linewidth}
    \centering
    \includegraphics[width=\linewidth]{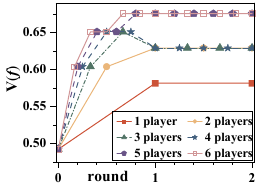}
    \caption{$V({\vf})$ as player increases.}
    \label{FigApp:simu-player-W}
  \end{subfigure}
  \begin{subfigure}{0.32\linewidth}
    \centering
    \includegraphics[width=\linewidth]{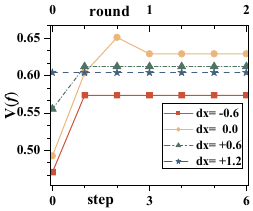}
    \caption{$V({\vf})$ as $dx$ increases.}
    \label{FigApp:simu-dx-W}
  \end{subfigure}

    \begin{subfigure}{0.32\linewidth}
    \centering
    \includegraphics[width=\linewidth]{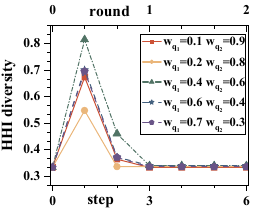}
    \caption{$D_\mathrm{HHI}$ with different $w$.}
    \label{FigApp:simu-w-D}
  \end{subfigure}
  \begin{subfigure}{0.32\linewidth}
    \centering
    \includegraphics[width=\linewidth]{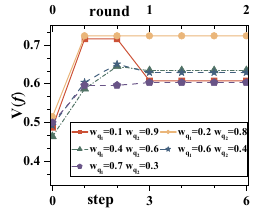}
    \caption{$V({\vf})$ with different $w$.}
    \label{FigApp:simu-w-W}
  \end{subfigure}
  \begin{subfigure}{0.32\linewidth}
    \centering
    \includegraphics[width=\linewidth]{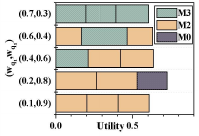}
    \caption{Utility with different $w$.}
    \label{FigApp:simu-w-U}
  \end{subfigure}

  \caption{Best-response simulations under different settings.}
  \label{FigApp:simu}
\end{figure}

\paragraph{Reward Function.} The expected reward of model $j$ for user type $\theta$ is $S_j(\vtheta)= \E_{x \sim g_j}[r_\vtheta(x)]$.
In theory, $g_j$ and $r_\theta$ are distinct objects, however, in our simulation, we collapse $g_j$ and $r_\theta$ into a single score function implemented as a radial basis function (RBF) mixture.

\paragraph{Simulation Parameters and Results.}
For all simulations, we have a model pool with $M=6$ models as shown in Table ~\ref{tab:simumodels} and $K=12$ user types drawn from the GMMs. The baseline user distribution uses $Q=2$ components as shown in Table ~\ref{tab:simuusers}. 

We conduct four sets of experiments:
\begin{itemize}[leftmargin=*,topsep=0cm]
    \item Expanding the model pool. With three players fixed, we gradually enlarge the model pool size from $1$ to $6$. The resulting diversity $D_{\mathrm{HHI}}$ and coverage value $V({\vf})$ are shown in Fig.~\ref{FigApp:simu-model-D} and Fig.~\ref{FigApp:simu-model-W}, respectively.
    \item Increasing the number of players. With the full model pool available, we increase the number of players from $1$ to $6$. The resulting diversity $D_{\mathrm{HHI}}$ and coverage value $V({\vf})$ are shown in Fig.~\ref{FigApp:simu-player-D} and Fig.~\ref{FigApp:simu-player-W}, respectively.
    \item Shifting user groups. With the full model pool and three players, we vary the GMM means used to sample user types by setting $dx \in \{-0.6,0.0,0.6,1.2\}$. The diversity $D_{\mathrm{HHI}}$ and coverage value $V({\vf})$ are shown in Fig.~\ref{FigApp:simu-dx-D} and Fig.~\ref{FigApp:simu-dx-W}, respectively.
    \item Changing mixture weights. With the full model pool and three players, we alter the GMM component weights $(w_{q_{1}},w_{q_{2}}) \in \{(0.1,0.9),(0.2,0.8),(0.4,0.6),(0.6,0.4),(0.7,0.3)\}$ The results are reported in Fig.~\ref{FigApp:simu-w-D}, Fig.~\ref{FigApp:simu-w-W} and Fig.~\ref{FigApp:simu-w-U}.
\end{itemize}

We observe that equilibria always exist in this setting, but diversity and welfare vary substantially depending on whether the new models are sufficiently differentiated. Strong but substitutable models lead to market homogenization, while genuinely differentiated entrants promote diversity and increase welfare.

%% file: Generative_Competition/Appendix_RealEx.tex
\section{Experiments with Real Dataset}
\label{sec:Ex:Real}

\subsection{Discrete Best-Response Simulation}
\label{subsec:Ex:Dis}
The models in the model pool are constructed by applying different LoRA parameters to the backbone network, each trained on different CIFAR-10 subsets, as summarized in Table.~\ref{tab:modelpool}. The backbone network itself was trained on the full CIFAR-10 dataset for $200$ epochs. During training, we used a learning rate of $2 \times 10^{4}$
, $1000$ diffusion steps, and a batch size of $256$.

We first provide the average performance $T_i$ of the five models in model pool and their user-specific performance $S_i(\vtheta)$ in user groups in Fig.~\ref{Figapp:FullTS}.

\begin{figure}[htbp]
\centering
  \begin{subfigure}{0.4\linewidth}
    \centering
    \includegraphics[width=\linewidth]{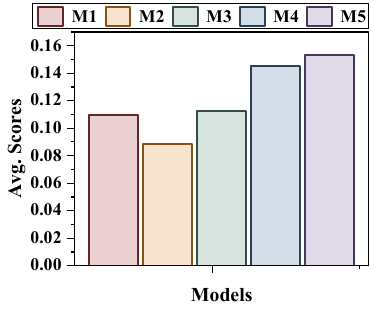}
    \caption{The average performance.}
    \label{Figapp:FullTS-T}
  \end{subfigure}
  \begin{subfigure}{0.4\linewidth}
    \centering
    \includegraphics[width=\linewidth]{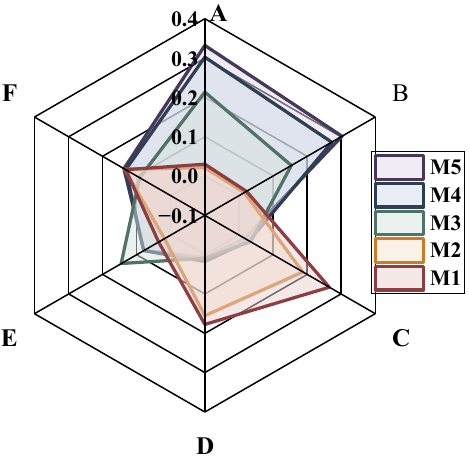}
    \caption{The user-specific performance.}
    \label{Figapp:FullTS-S}
  \end{subfigure}

  \caption{The average performance $T_i$ of the five models in model pool in Table~\ref{tab:modelpool} and their user-specific performance $S_i(\vtheta)$ in Table~\ref{tab:usergroups}.}
  \label{Figapp:FullTS}
\end{figure}

\paragraph{Impact of different user groups.} As a complementary experiment, we examine how platform choices vary when facing different user groups. Specifically, we consider three player choosing from the model pool. The user group configurations are provided in Table.~\ref{tabapp:userpool}, and the corresponding equilibrium outcomes are summarized in Table.~\ref{tabapp:userresults}.

\begin{table*}[t]
\centering
\small
\begin{minipage}{0.49\textwidth}
\centering
\caption{Different user pool}
\label{tabapp:userpool}
\begin{tabular}{l c}
\toprule
\textbf{User Pool} & \textbf{User type ($\vtheta$) with $\pi(\vtheta)$}\\
\midrule
Pool 1   & A (0.2), B (0.2), E (0.2), F (0.2) \\
Pool 2   & A (0.3), C (0.3), E (0.4) \\
Pool 3   & C (0.2), D (0.2), E (0.35), F (0.33) \\
Pool 4   & A (0.6), F (0.4) \\
\multirow{2}{*}{Pool 5}   & \multicolumn{1}{c}{A (0.1), B (0.09), C (0.16),}\\
& \multicolumn{1}{c}{D (0.32), E (0.17), F (0.16)} \\
Pool 6   & C (0.33), D (0.35), E (0.2), F (0.12) \\
\bottomrule
\end{tabular}

\end{minipage}
\hfill
\begin{minipage}{0.49\textwidth}
\centering
\caption{User type with its preference}
\begin{tabular}{l c}
\toprule
\textbf{User type}($\vtheta$) & \textbf{Preferences}(class($\theta_{c}$)) \\
\midrule
A   & cat (0.6), dog (0.4)\\
B & dog (0.7) cat (0.3) \\
C  & airplane (0.5), ship (0.3), auto (0.2)\\
D   & auto (0.6), truck (0.4) \\
\multirow{2}{*}{E}     & \multicolumn{1}{c}{bird (0.4), deer (0.3),}  \\
    & \multicolumn{1}{c}{frog (0.2), horse (0.1) }\\
\multirow{2}{*}{F} & \multicolumn{1}{c}{cat (0.2), dog (0.2), airplane (0.15),}  \\
  & \multicolumn{1}{c}{auto (0.15), ship (0.1), truck (0.2)} \\
\bottomrule
\end{tabular}
\end{minipage}
\end{table*}

\begin{table}[t]
\centering
\caption{Outcomes of a 3-player setting under different user pools.}
\label{tabapp:userresults}
\begin{tabular}{lcccc}
\toprule
\textbf{User Pool} & \textbf{$f$} & \textbf{$D_{\text{supp}}$} & \textbf{$D_{\text{HHI}}$} & \textbf{$W_{\text{eq}}$} \\
\midrule
Pool 1 & $(M5, M5, M5)$ & 1 & 0.3333 & 0.2110 \\
Pool 2 & $(M5, M5, M1)$ & 2 & 0.3349 & 0.2117 \\
Pool 3 & $(M5, M3, M1)$ & 3 & 0.3335 & 0.1615 \\
Pool 4 & $(M5, M5, M5)$ & 1 & 0.3333 & 0.2366 \\
Pool 5 & $(M5, M1, M1)$ & 2 & 0.3856 & 0.1932 \\
Pool 6 & $(M1, M1, M1)$ & 1 & 0.3333 & 0.1715 \\
\bottomrule

\end{tabular}

\end{table}

\subsection{Algorithmic Best-Response Entry}
\label{subsec:Ex:Alg}

\subsubsection{Algorithm Details}
\label{subsubsec:algdetails}
We provide the implementation details and hyperparameters used in our experiments for evaluating algorithmic performance. We first describe the specific procedures of the Resampling and Direct-Gradient methods, followed by the hyperparameters employed in training and evaluation. Unless otherwise specified, the same base diffusion backbone and optimization settings are applied across methods for a fair comparison.

\paragraph{Resampling.}
The algorithm for resampling method is shown as Algorithm.~\ref{alg:resampling}.
\begin{algorithm}[h]
\caption{Training Data Resampling}
\label{alg:resampling}
\begin{algorithmic}[1]
    \REQUIRE Dataset $\sD$; user types $\Theta$; fixed opponents $\sG=\{g_1, \cdots, g_M\}$ with scores $\{S_{m}(\vtheta)\}$; parameters $\beta, \gamma$; outer rounds $T$; inner epochs $E$; evaluation budget $b$.
    \STATE Compute $\bar {S}(\vtheta) = \max_{j\in\sM} S_j(\vtheta)$ for all $\theta$.
    \FOR {$t=1,\dots,T$}
      \STATE Estimate $S_{\phi}(\vtheta):= \E_{x_{1:b} \sim g_{\phi}} r_{\vtheta}(x)$.
      \STATE Compute $\Delta_{\vtheta}:= S_{\phi}(\vtheta)-\bar{S}(\vtheta)$.
      \STATE Compute $\sigma_{\vtheta}=\sigma\left(\beta\Delta_{\vtheta}\right)$ 
      \STATE Type weights $\alpha_{\vtheta}=\pi(\vtheta) \left(\sigma_{\vtheta}\right)^{\gamma}\bar{S}(\vtheta)$.
      \STATE Data weights: $\hat{w}(u)\propto\sum_{\vtheta}\alpha_{\vtheta} q_{\vtheta}(u)$ or $\hat{w} (x)\propto \sum_{\vtheta}\alpha_{\vtheta} r_{\vtheta}(x)$.
      \STATE Sample $\sD$ with $\hat{w}$ as $\hat{\sD}$
      \FOR {$e=1,\dots,E$}
         \STATE Update $\phi$ by minimizing the original loss in $\hat{\sD}$.
      \ENDFOR
    \ENDFOR
    \RETURN $\phi$.
    \end{algorithmic}
\end{algorithm}

The specific parameter details for algorithm-level:
\begin{itemize}[leftmargin=*,topsep=0cm]
    \item Outer round, the time of resample $T=5$.
    \item Inner epcohs $E=50$.
    \item $\beta = 4$.
    \item $\gamma =1$.
    \item Use adaptive scaling for $\sigma_{\vtheta}$.
    \item LoRA rank $4$, LoRA scaling coefficient $16$, LoRA runtime multiplier $1.0$.
\end{itemize}

\paragraph{Direct-Gradient.}
The algorithm for direct-gradient method is shown as Algorithm.~\ref{alg:direct-path}.
\begin{algorithm}[H]
\caption{Direct-Gradient Optimization}
\label{alg:direct-path}
\begin{algorithmic}[1]
\REQUIRE Dataset $\sD$; user types $\Theta$; fixed opponents $\sG=\{g_1, \cdots, g_M\}$ with scores $\{S_{m}(\vtheta)\}$; parameters $\lambda$; Epochs $E$;
\FOR{$e=1,\dots,E$}
    \STATE Estimate $S_{\phi}(\vtheta):= \E_{x_{1:b} \sim g_{\phi}} r_{\vtheta}(x)$.
    \STATE Compute $\Delta_{\vtheta}:= S_{\phi}(\vtheta)-\bar{S}(\vtheta)$.
    \STATE Compute $\sigma_{\vtheta}=\sigma\left(\beta\Delta_{\vtheta}\right)$
    \STATE $L \leftarrow l(\phi) - \lambda \sum_{\vtheta \in \Theta} \pi(\vtheta) \sigma_{\vtheta}S_{\phi}(\vtheta) $
  \STATE $\phi \leftarrow \phi - \eta \nabla L$.
\ENDFOR
\RETURN $\phi$ 
\end{algorithmic}
\end{algorithm}

The specific parameter details for algorithm-level:
\begin{itemize}[leftmargin=*,topsep=0cm]
    \item Epcohs $E=20$.
    \item $\lambda = 0.4$.
    \item Use adaptive scaling for $\sigma_{\vtheta}$.
\end{itemize}

\paragraph{Shared parameter.} The specific parameter details for shared training parameters:
\begin{itemize}[leftmargin=*,topsep=0cm]
    \item The backbone network is trained on the full CIFAR-10 dataset for 200 epochs.
    \item Batch size $256$.
    \item Learning rate $2 \times 10^{-4}$ (for AdamW optimizer).
    \item Diffusion steps $1000$.
\end{itemize}

\paragraph{Data Distribution.}
As a supplement to Fig.~\ref{fig:method} A, we provide the label distributions of $2,000$ samples generated by three models, as shown in the Fig.~\ref{FigApp:method_class}. 
\begin{figure}[htbp!]
  \centering
    \includegraphics[width=0.75\linewidth]{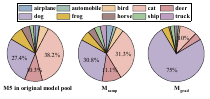}
  \caption{Label distributions of $2,000$ generated samples from three models: (a) left: $M2$ from the original model pool (b) middle: $M_{\mathrm{samp}}$ by redampling method. (c) right: $M_{\mathrm{grad}}$ by direct-gradient method.}
  \label{FigApp:method_class}
\end{figure}

\begin{figure}[thbp!]
  \centering
  \begin{subfigure}{0.3\linewidth}
    \centering
    \includegraphics[width=\linewidth]{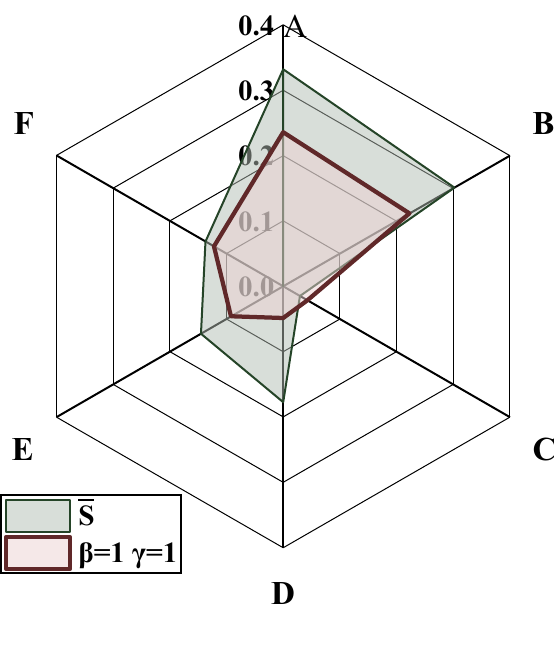}
  \end{subfigure}
  \begin{subfigure}{0.3\linewidth}
    \centering
    \includegraphics[width=\linewidth]{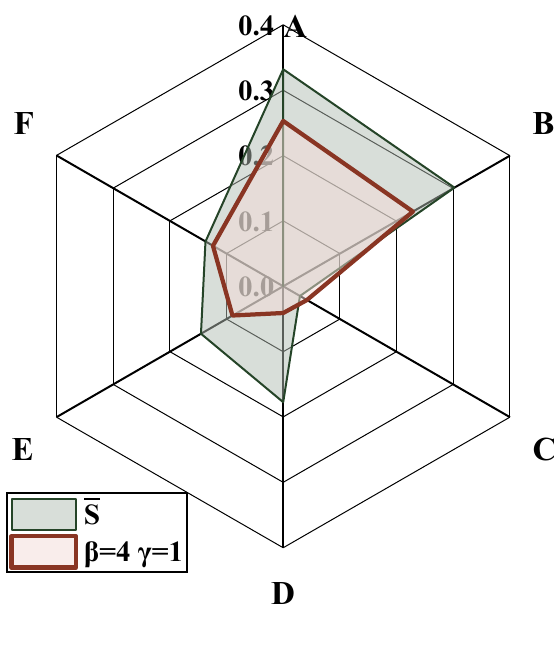}
  \end{subfigure}
    \begin{subfigure}{0.3\linewidth}
    \centering
    \includegraphics[width=\linewidth]{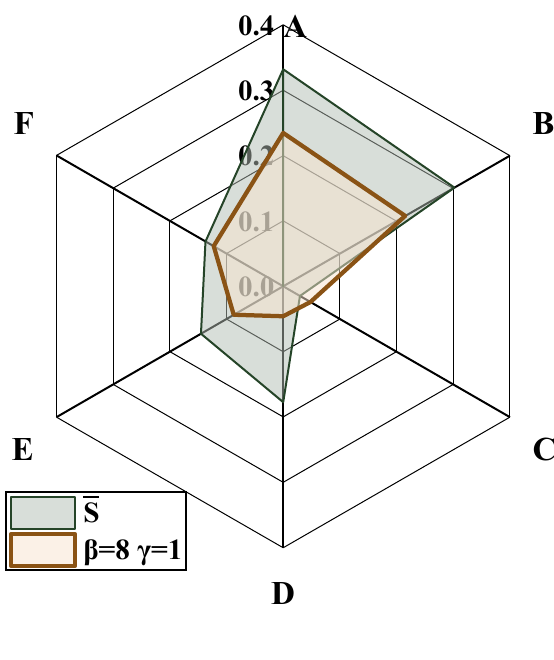}
  \end{subfigure}

    \begin{subfigure}{0.3\linewidth}
    \centering
    \includegraphics[width=\linewidth]{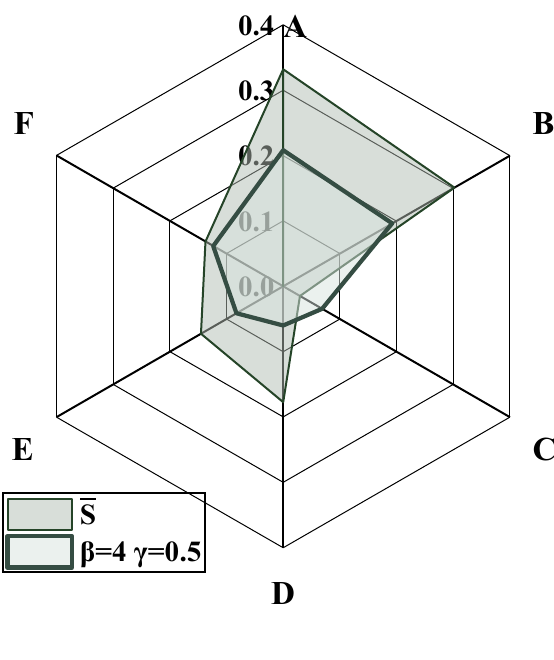}
  \end{subfigure}
  \begin{subfigure}{0.3\linewidth}
    \centering
    \includegraphics[width=\linewidth]{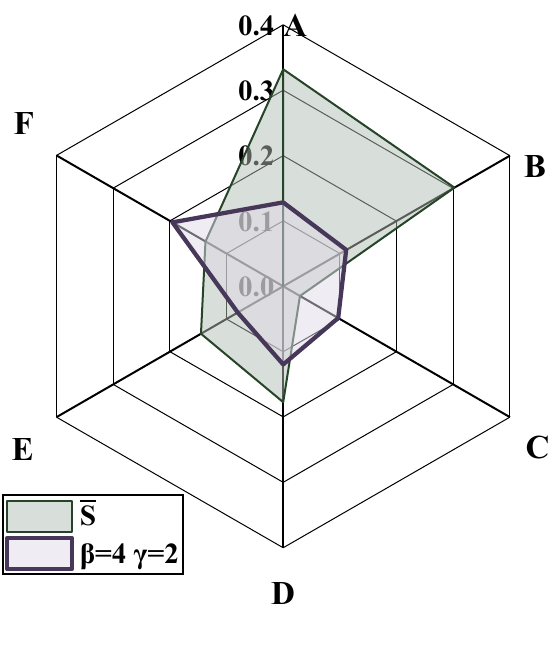}
  \end{subfigure}
    \begin{subfigure}{0.3\linewidth}
    \centering
    \includegraphics[width=\linewidth]{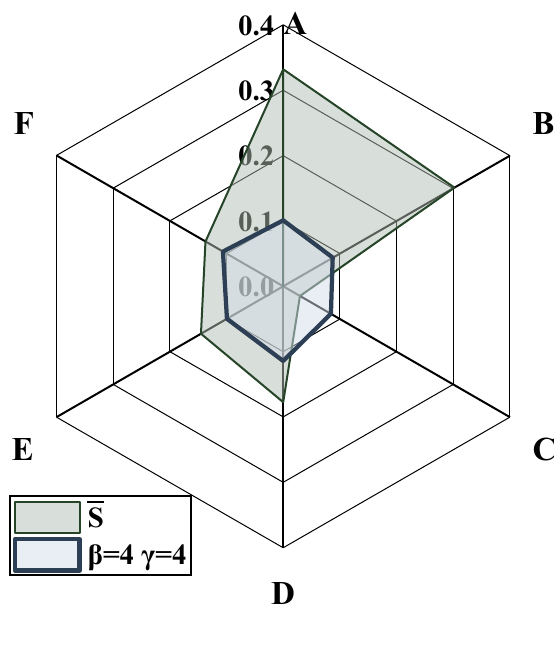}
  \end{subfigure}
  
  \caption{Different performance of model scores across user groups under different $\beta$ and $\gamma$ compared to each user group's best score $\bar{S}$.}
  \label{FigApp:betagamma}
\end{figure}

\begin{figure}[thbp!]
  \centering
  \begin{subfigure}{0.3\linewidth}
    \centering
    \includegraphics[width=\linewidth]{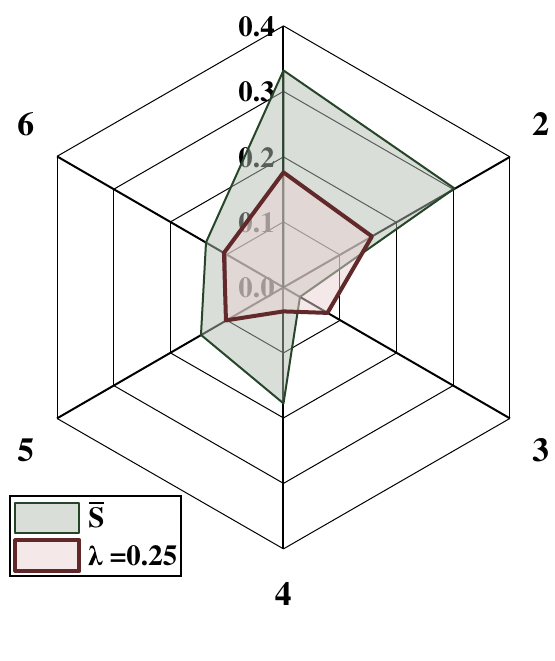}
  \end{subfigure}
  \begin{subfigure}{0.3\linewidth}
    \centering
    \includegraphics[width=\linewidth]{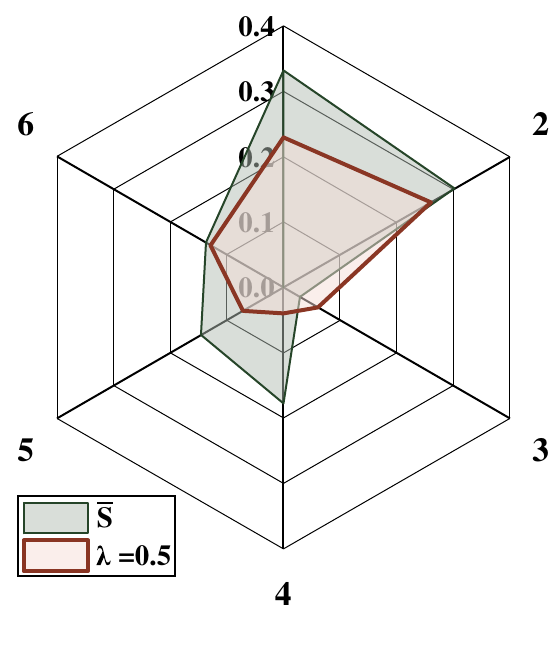}
  \end{subfigure}
    \begin{subfigure}{0.3\linewidth}
    \centering
    \includegraphics[width=\linewidth]{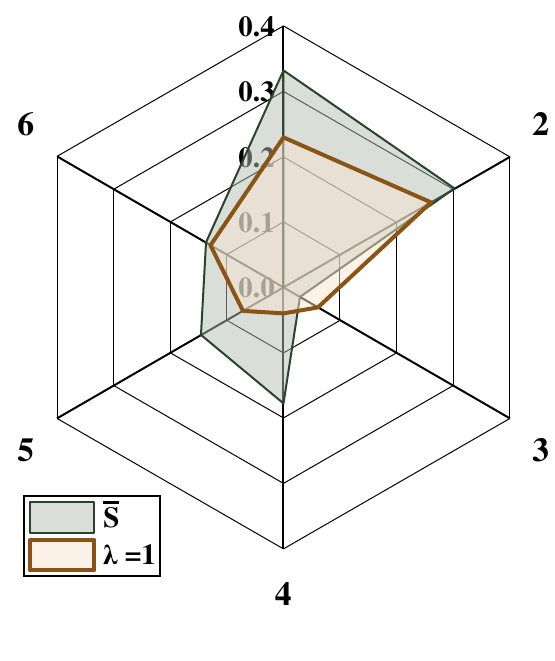}
  \end{subfigure}

    \begin{subfigure}{0.3\linewidth}
    \centering
    \includegraphics[width=\linewidth]{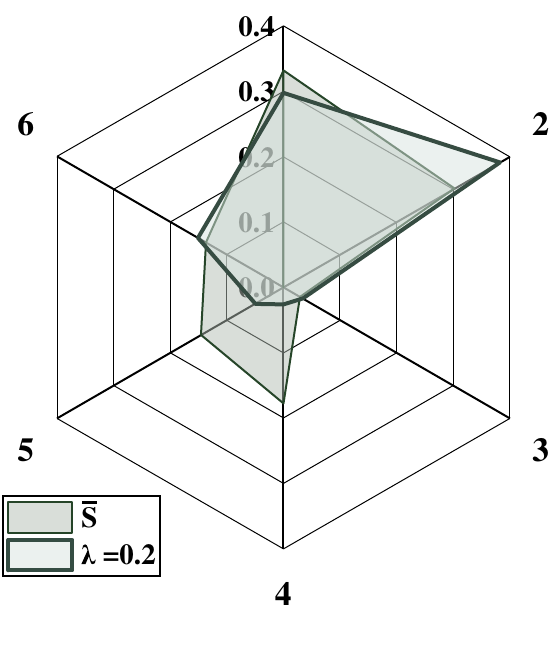}
  \end{subfigure}
  \begin{subfigure}{0.3\linewidth}
    \centering
    \includegraphics[width=\linewidth]{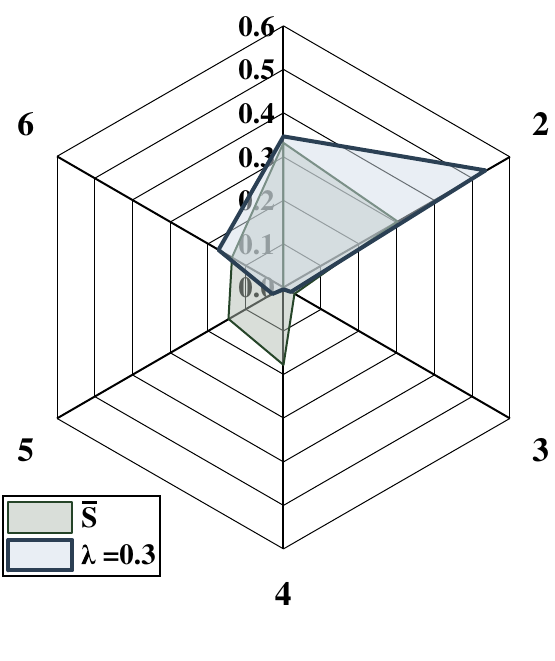}
  \end{subfigure}
    \begin{subfigure}{0.3\linewidth}
    \centering
    \includegraphics[width=\linewidth]{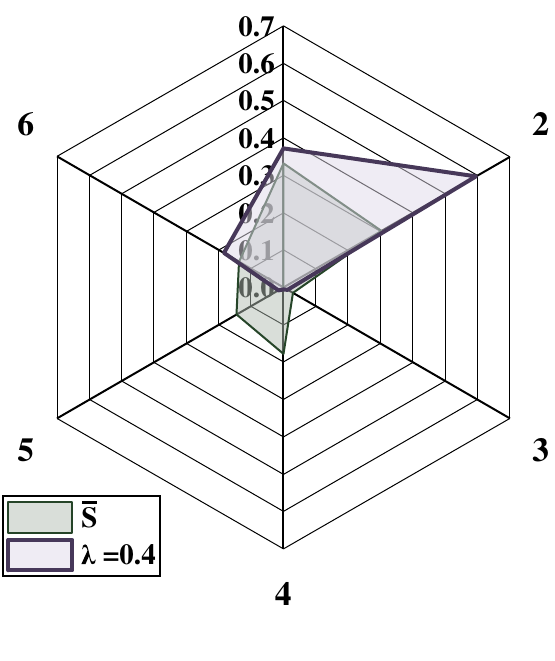}
  \end{subfigure}

  \caption{Different performance of model scores across user groups under different $\lambda$ compared to each user group's best score $\bar{S}$.}
  \label{FigApp:lambda}
\end{figure}

\subsubsection{Algorithm Parameter Sensitivity Analysis}
\label{subsubsec:algsentivity}
For resampling,we investigate the sensitivity of $\beta$ and $\gamma$. The result is shown as Fig.~\ref{FigApp:betagamma}.
For direct-gradient optimaiztion, We investigate the sensitivity of $\lambda$, which controls the trade-off between utility and the diffusion. The result is shown as Fig.~\ref{FigApp:lambda}.

%% file: Generative_Competition/Appendix_AddEx.tex
\section{Experiments on language models}
In this section, we study the three-layer game in a language setting using real large language models.

\begin{table}[t]
\centering
\small
\caption{Performance of four models on coding and reasoning benchmarks.}
\begin{tabular}{lccccc}
\toprule
Model & HEval & Multi-language & Overall & Math & IFEval \\
\midrule
M0: CodeLlama-34B      & 0.5079 & 0.4297 & 0.1721 & 0.0413 & 0.4604 \\
M1: Qwen2.5-Coder-32B        & 0.5710 & 0.6497 & 0.3326 & 0.3089 & 0.4363 \\
M2: Nxcode-CQ-7B-orpo           & 0.8723 & 0.6688 & 0.1237 & 0.4007 & 0.4007 \\
M3: Qwen2.5-Coder-32B-Instruct  & 0.8320 & 0.7723 & 0.3989 & 0.4955 & 0.7265 \\
\bottomrule
\end{tabular}

\label{tab:code_llm_scores}
\end{table}

\begin{table*}[t]
\centering
\small
\begin{minipage}{0.49\textwidth}
\centering
\caption{Different user pool}
\label{tabapp:userpool_lang}
\begin{tabular}{l c}
\toprule
\textbf{User Pool} & \textbf{User type ($\vtheta$) with $\pi(\vtheta)$}\\
\midrule
Pool 1   & A (0.2), B (0.2), C(0.2), D(0.2), E (0.2) \\
Pool 2   & A (0.1), B (0.1), C(0.2), D(0.5), E (0.1) \\
\multirow{2}{*}{Pool 3 }     & \multicolumn{1}{c}{A (0.35), B (0.2), C(0.2),}  \\
    & \multicolumn{1}{c}{D(0.35), E (0.2)) }\\
\bottomrule
\end{tabular}

\end{minipage}
\hfill
\begin{minipage}{0.48\textwidth}
\centering
\caption{User type with its preference}
\label{tabapp:usertype_lang}
\begin{tabular}{l c}
\toprule
\textbf{User}\\ \textbf{type}($\vtheta$) & \textbf{Preferences}(class($\theta_{c}$)) \\
\midrule
A   & HEval (0.6), Overall (0.2), IFEval(0.2)\\
B & Overall (0.8), IFEval (0.2) \\
C  & HEval (0.8), IFEval (0.2)\\
D   & HEval (0.6), Muti-language (0.5) \\
E  & Math (1.0)\\
\bottomrule
\end{tabular}
\end{minipage}
\end{table*}

\begin{table}[t]
\centering
\caption{Outcomes of a 3-player setting under different user pools.}
\label{tabapp:userresults_lang}
\begin{tabular}{lcccc}
\toprule
\textbf{User Pool} & \textbf{$f$} & \textbf{$D_{\text{supp}}$} & \textbf{$D_{\text{HHI}}$} & \textbf{$W_{\text{eq}}$} \\
\midrule
Pool 1 & $(M3, M3, M3)$ & 1 & 0.3333 & 0.65945 \\
Pool 2 & $(M3, M2, M2)$ & 2 & 0.375 & 0.75949 \\
Pool 3 & $(M3, M3, M2)$ & 2 & 0.33375 & 0.73080 \\
\bottomrule

\end{tabular}
\end{table}

\begin{figure}[thbp!]
  \centering
  \begin{subfigure}{0.32\linewidth}
    \centering
    \includegraphics[width=\linewidth]{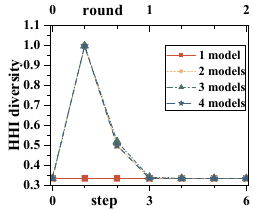}
    \caption{$D_\mathrm{HHI}$ as model pool increases.}
    \label{FigApp:lang-model-D}
  \end{subfigure}
  \begin{subfigure}{0.32\linewidth}
    \centering
    \includegraphics[width=\linewidth]{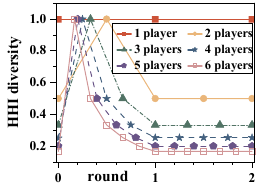}
    \caption{$D_\mathrm{HHI}$ as player increases.}
    \label{FigApp:lang-player-D}
  \end{subfigure}
  \begin{subfigure}{0.32\linewidth}
    \centering
    \includegraphics[width=\linewidth]{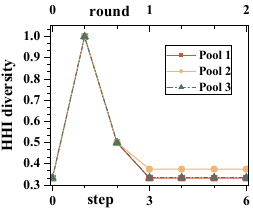}
    \caption{$D_\mathrm{HHI}$ as $dx$ increases.}
    \label{FigApp:lang-dx-D}
  \end{subfigure}

      \begin{subfigure}{0.32\linewidth}
    \centering
    \includegraphics[width=\linewidth]{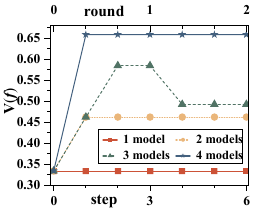}
    \caption{$V({\vf})$ as model pool increases.}
    \label{FigApp:lang-model-W}
  \end{subfigure}
  \begin{subfigure}{0.32\linewidth}
    \centering
    \includegraphics[width=\linewidth]{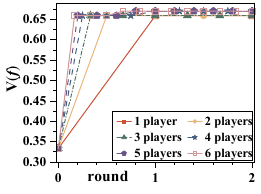}
    \caption{$V({\vf})$ as player increases.}
    \label{FigApp:lang-player-W}
  \end{subfigure}
  \begin{subfigure}{0.32\linewidth}
    \centering
    \includegraphics[width=\linewidth]{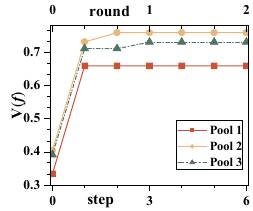}
    \caption{$V({\vf})$ as $dx$ increases.}
    \label{FigApp:lang-dx-W}
  \end{subfigure}

  \caption{Best-response simulations under different settings of language tasks.}
  \label{FigApp:simu_lang}
\end{figure}

\paragraph{Model Pool.} We consider a pool of four publicly available models that appear on both the BigCode models leaderboard \citep{bigcode_models_leaderboard} and the Open LLM Leaderboard \citep{open-llm-leaderboard-v2}: CodeLlama-34B \citep{codellama}, Qwen2.5-Coder-32B \citep{qwen2.5}, Nxcode-CQ-7B-orpo \citep{orpo} and Qwen2.5-Coder-32B-Instruct \citep{qwen2.5}. For each model, we collect its HumanEval-Python score (HEval) and a multi-language coding score (Multi-language, average over Java, JavaScript, and C++) from the BigCode leaderboard, as well as its overall, math-related scores and instruction-following evaluation (IFEval) from the Open LLM Leaderboard. Table~\ref{tab:code_llm_scores} summarizes these benchmark results, which we treat as pre-computed performance statistics.

\paragraph{User Group and reward function.} We partition the user population into five groups, each characterized by heterogeneous preferences over metric preferences the details are given in Table~\ref{tabapp:usertype_lang}. Then the reward for user type $\vtheta$ is calculated by $r_{\vtheta}(x)=\sum_{c \in \sC}\theta_{c} \cdot \text{performance}$.

\paragraph{Simulation and Results.}
We conduct three sets of experiments:
\begin{itemize}[leftmargin=*,topsep=0cm]
    \item Expanding the model pool. With three players fixed, we gradually enlarge the model pool size from $1$ to $4$ face the user pool $1$. The resulting diversity $D_{\mathrm{HHI}}$ and coverage value $V({\vf})$ are shown in Fig.~\ref{FigApp:lang-model-D} and Fig.~\ref{FigApp:lang-model-D}, respectively.
    \item Increasing the number of players. With the full model pool available, we increase the number of players from $1$ to $6$ face the user pool $1$. The resulting diversity $D_{\mathrm{HHI}}$ and coverage value $V({\vf})$ are shown in Fig.~\ref{FigApp:lang-player-D} and Fig.~\ref{FigApp:lang-player-W}, respectively.
    \item Shifting user groups. With the full model pool and three players, we vary the user pool as shown in Table.~\ref{tabapp:userpool_lang} . The diversity $D_{\mathrm{HHI}}$ and coverage value $V({\vf})$ are shown in Fig.~\ref{FigApp:lang-dx-D} and Fig.~\ref{FigApp:lang-dx-W}, respectively.  The corresponding equilibrium outcomes are summarized in Table.~\ref{tabapp:userresults_lang}.
\end{itemize}